\newtheorem{theorem}{Theorem}[section]
\newtheorem{lemma}{Lemma}
\newtheorem{remark}{Remark}
\newtheorem{prop}[theorem]{Proposition}
\tikzset{->-/.style={decoration={
  markings,
  mark=at position .68 with {\arrow{Latex}}},postaction={decorate}}}
\tikzset{->>-/.style={decoration={
  markings,
  mark=at position .69 with {\arrow{Latex[sep=-10pt]Latex}}},postaction={decorate}}}
\tikzset{-<-/.style={decoration={
  markings,
  mark=at position .56 with {\arrow{Latex[reversed]}}},postaction={decorate}}}
\tikzset{-<<-/.style={decoration={
  markings,
  mark=at position .57 with {\arrow{Latex[reversed,sep=-10pt]Latex[reversed]}}},postaction={decorate}}}
\tikzset{-|-/.style={decoration={
  markings,
  mark=at position .51 with {\arrow{Bar}}},postaction={decorate}}}
\tikzset{-||-/.style={decoration={
  markings,
  mark=at position .49 with {\arrow{Bar[sep=-5pt] Bar}}},postaction={decorate}}}
\tikzset{-!-/.style={decoration={
  markings,
  mark=at position .51 with {\arrow{Tee Barb[length=4pt]}}},postaction={decorate}}}
\tikzset{-!!-/.style={decoration={
  markings,
  mark=at position .51 with {\arrow{Tee Barb[sep=1pt,length=4pt] Tee Barb[length=4pt]}}},postaction={decorate}}}
\tikzset{cross/.style={cross out, draw=black, minimum size=2*(#1-\pgflinewidth), inner sep=0pt, outer sep=0pt},
%default radius will be 1pt. 
cross/.default={1pt}}
\newcommand{\ds}{\displaystyle}
\renewcommand{\author}[1]{\large\rm #1\\ \bigskip}
\newcommand{\address}[1]{{\normalsize\it #1\\}\bigskip}
\renewcommand{\title}[1]{\bigskip\bigskip\Large\bf #1\bigskip\bigskip\\}
\newcommand{\Bigpsi}[3]{\phantom{\Psi}_2 \kern -.05em
\Psi_2\left(\genfrac{}{}{0pt}{}{#1}{#2}\biggl|#3\right)}
\newcommand{\bea}{\begin{eqnarray}}
\newcommand{\eea}{\end{eqnarray}}
\newcommand{\ii}{\mathsf{i}}
\newcommand{\oW}{\overline{W}}
\renewcommand{\r}{{\mathsf r}}
\newcommand{\lag}{{\mathcal L}}
\newcommand{\ol}{\overline{\lag}}
\newcommand{\olamh}{\hat{\olamh}}
\newcommand{\q}{{\mathsf q}}
\newcommand{\p}{{\mathsf p}}
\newcommand{\iW}{{W}}
\newcommand{\bW}{{V}}
\newcommand{\bb}{\mathsf{b}}
\newcommand{\bp}{{\mathbf \p}}
\newcommand{\bq}{{\mathbf \q}}
\newcommand{\br}{{\mathbf \r}}
\newcommand{\bu}{{\mathbf u}}
\newcommand{\bv}{{\mathbf v}}
\newcommand{\bbw}{{\mathbf w}}
\newcommand{\bw}{{\mathbf w}}
\newcommand{\bx}{{\mathbf x}}
\newcommand{\cbx}{{x}}
\newcommand{\by}{{\mathbf y}}
\newcommand{\cby}{{y}}
\def\EXP{\textrm{{\large e}}}
\def\re{\mathop{\hbox{\rm Re}}\nolimits}
\def\im{\mathop{\hbox{\rm Im}}\nolimits}
\newcommand{\x}{{\boldsymbol{x}}}
\newcommand{\y}{{\boldsymbol{y}}}
\newcommand{\al}{{\bm{\alpha}}}
\newcommand{\bt}{{\bm{\beta}}}
\newcommand{\gm}{{\bm{\gamma}}}
\newcommand{\tht}{{\bm{\theta}}}
\newcommand{\ph}{{\bm{\phi}}}
\newcommand{\ow}{\overline{W}}
\renewcommand{\r}{{\mathsf r}}
\newcommand{\Log}{{\textrm{Log}\hspace{1pt}}}
\newcommand{\dilog}{{\textrm{Li}_2}}
\newcommand{\lie}{{\textrm{Li}_2}}
\newcommand{\spn}{\bm{\xi}}
\newcommand{\cspn}{\xi}
\def\EXP{\textrm{{\large e}}}
\def\re{\mathop{\hbox{\rm Re}}\nolimits}
\def\im{\mathop{\hbox{\rm Im}}\nolimits}
\newcounter{app}
\newcounter{sapp}[app]
\begin{document}

\vglue 2cm

\begin{center}

\title{Quasi-classical expansion of a hyperbolic solution to the star-star relation and multicomponent 5-point difference equations}
\author{Andrew P.~Kels}
%\address{Scuola Internazionale Superiore di Studi Avanzati \\ Via Bonomea 265, 34136 Trieste, Italy}
\address{School of Mathematics and Statistics, The University of New South Wales, Sydney, NSW 2052, Australia}
% \today

\end{center}

\begin{abstract}

The quasi-classical expansion of a multicomponent spin solution of the star-star relation with hyperbolic Boltzmann weights is investigated.  %The Boltzmann weights for the corresponding lattice model of statistical mechanics are given in terms of the hyperbolic gamma function.  
The equations obtained in a quasi-classical limit provide $n-1$-component extensions of certain scalar 5-point equations (corresponding to $n=2$) that were previously investigated by the author in the context of integrability and consistency of equations on face-centered cubics.  %Integrability of these equations is discussed in the context of consistency-around-a-face-centered-cube.

\end{abstract}

\section{Introduction}

Edge-interaction models of statistical mechanics involve interactions between pairs of spin variables that are connected by edges of a lattice.  A prominent example of such a model is the Ising model.  A Yang-Baxter equation for such models takes a special form known as the star-triangle relation \cite{Baxter:1982zz,McCoyBook,PerkSTR,PerkYBEs,Bazhanov:2016ajm}.  If the Boltzmann weights of a model satisfy the star-triangle relation this implies that the transfer matrices of the model commute and in principle one can solve the model using the methods of Baxter \cite{Baxter:1982zz}.  Thus a solution of the star-triangle relation may be used to define an integrable lattice model of statistical mechanics.  Some solutions of the star-triangle relation that generalise the Ising model have been extensively studied including the Fateev-Zamolodchikov model \cite{Fateev:1982wi}, Kashiwara-Miwa model \cite{Kashiwara:1986tu}, and chiral Potts model \cite{AuYang:1987zc,Baxter:1987eq}.  Recently, several generalisations of these solutions of the star-triangle relation have been obtained which have deep and interesting connections with hypergeometric integrals, supersymmetric quantum field theories, discrete integrable systems, and other related areas \cite{Bazhanov:2007mh,Bazhanov:2010kz,Spiridonov:2010em,Zabrodin:2010qm,Derkachov:2012iv,Chicherin:2012yn,Yamazaki:2013nra,Kels:2015bda,Kashaev:2015nya,Gahramanov:2016wxi,GahramanovKels,Kels:2018xge,Yamazaki:2018xbx,Sarkissian:2018ppc,Spiridonov:2019uuw,Derkachov:2019ynh,Derkachov:2019tzo,Eren:2019ibl,de-la-Cruz-Moreno:2020xop,Bazhanov:2022wdj,SchlosserSTR}.  

Besides the star-triangle relation, there is another relation that implies integrability for edge-interaction models of statistical mechanics known as the star-star relation \cite{Baxter:1986df,Bazhanov:1992jqa,KashaevStarSquare,Baxter:1997tn}.  In this case, one may take products of Boltzmann weights to reformulate the model as either an interaction-round-a-face (IRF) model or a vertex model, and the star-star relation implies that a form of the Yang-Baxter equation is satisfied in either case.  Such a vertex formulation was notably used by Bazhanov and Stroganov \cite{Bazhanov:1989nc} to derive an $R$-matrix of the chiral Potts model as the intertwiner of two $L$-operators associated with the $R$-matrix of the six-vertex model, and this idea was subsequently generalised to construct the $sl(n)$ chiral Potts model from an $n$-state model associated with the $\mathcal{U}_{\q}(sl(n))$ algebra \cite{Bazhanov:1990qk}.  As for the star-triangle relation, general new solutions of the star-star relation have recently been obtained having interesting connections with hypergeometric integrals, supersymmetric quantum field theories, and discrete integrable systems \cite{Bazhanov:2011mz,Bazhanov:2013bh,Yamazaki:2013nra,Kels:2017toi,Yamazaki:2018xbx,Catak:2021coz,Mullahasanoglu:2023nes}.

An important connection has been developed \cite{Bazhanov:2007mh,Bazhanov:2010kz,Bazhanov:2011mz,Bazhanov:2016ajm,Kels:2018xge,Kels:2020zjn} between the above star-triangle and star-star relations for integrable edge-interaction models of statistical mechanics and different types of integrable systems which can be classed as integrable partial difference equations \cite{hietarinta_joshi_nijhoff_2016}. The latter provide lattice analogues of integrable partial differential equations, an important example of which is the Korteweg-de Vries (KdV) equation.  The characteristics of integrability for these discrete soliton equations are quite different to that for edge-interaction models, and include the existence of Lax pairs, B\"acklund transformations, and measuring a low `entropy' of the evolution.  Thus, understanding the aforementioned connection can be expected to provide insight into how the characteristics of the two different types integrable systems are related and potentially be a step towards a unified definition of integrability of these systems.

Specifically, the key to connecting these two different types of integrable systems lies in the quasi-classical expansion.  For example, through this expansion the leading asymptotics of the star-triangle are described by additive three-leg forms associated to 4-point discrete soliton equations \cite{ABS,ABS2}, and the latter equations are then equivalent to the exponential of the saddle-point equation of the star-triangle relation \cite{Bazhanov:2016ajm,Kels:2018xge}.  Furthermore, the quasi-classical asymptotics of the partition function of the edge-interaction model are described by discrete Laplace-type equations \cite{AdlerPlanarGraphs,BobSurQuadGraphs,ABS,MR2467378} associated to the discrete soliton equations.  The saddle-point equations in this case correspond to systems of 5-point difference equations that evolve in the square lattice \cite{Kels:2020zjn}.  %Multicomponent generalisations of discrete Laplace-type equations were derived from elliptic solutions of the star-star relation by Bazhanov and Sergeev \cite{Bazhanov:2011mz}.

Thus, if a solution to the star-triangle or star-star relation is known, the quasi-classical limit can provide a way to derive and investigate difference equations, where the latter should be expected to inherit characteristics associated with integrability of the lattice models.   For example, in the quasi-classical limit Yang-Baxter equations implied by the star-star relation may be reinterpreted as consistency relations satisfied by 5-point difference equations \cite{Kels:2020zjn}.  In turn, it is possible to reinterpret these consistency conditions in terms of Lax pairs \cite{KelsLax1,KelsLax2}, similarly to Lax pairs that arise from consistency of 4-point difference equations \cite{nijhoffwalker,BobSurQuadGraphs}.  Thus this shows how consistency and Lax pairs for integrable difference equations are connected to forms of the Yang-Baxter equation for integrable lattice models of statistical mechanics.

While the quasi-classical limit for models satisfying the star-triangle relation and the connection to integrable difference equations has been relatively well developed \cite{Bazhanov:2007mh,Bazhanov:2010kz,Bazhanov:2011mz,Bazhanov:2016ajm,Kels:2018xge}, much less has been done for the star-star relation.  A couple of results for this area include the quasi-classical limit of a general elliptic solution of the star-star relation investigated by Bazhanov and Sergeev \cite{Bazhanov:2011mz}, as well as the study of a degeneration of hyperbolic analogues of equations associated to the latter star-star relation \cite{Kels:2018qzx}.  The purpose of this paper is to study the latter hyperbolic equations themselves in more detail.  Namely, this paper will investigate the quasi-classical expansion for a multicomponent spin solution of the star-star relation corresponding to an identity for hyperbolic hypergeometric integrals associated to the $A_n$ root system \cite{Rains2009} and the resulting system of $n-1$-component difference equations.  The latter equations will be found to provide multicomponent extensions of known integrable scalar 5-point equations that were previously studied in the context of the IRF formulation of edge-interaction models and consistency on face-centered cubics \cite{Kels:2020zjn}. %Integrability of the resulting difference equations is proposed in the form of consistency-around-a-face-centered-cube \cite{Kels:2020zjn}.  

The layout of this paper is as follows.  In Section \ref{sec:YBE} the edge-interaction model of statistical mechanics associated to the star-star relation is introduced.  In Section \ref{sec:hyperbolic}, the particular hyperbolic solution of the star-star relation is given, and the quasi-classical limit is investigated from which the $n-1$ component difference equations are derived.  The rational analogues of the multicomponent difference equations are also given.  In Section \ref{sec:latticeconsistency}, the quasi-classical expansion of the partition function is considered and consistency for the multicomponent difference equations is formulated in terms of the consistency-around-a-face-centered-cube property.

\section{Edge-interaction model and star-star relation}\label{sec:YBE}

\subsection{Edge interaction model on checkerboard square lattice}
\label{sec:msmodeldef}

It is convenient to define the model of statistical mechanics on the checkerboard square lattice \cite{Baxter:1986df} which is pictured in Figure \ref{fig-lattice}.  The vertices of the lattice may be decomposed into black and white subsets, where the set of black vertices will be denoted $V^{(B)}$ and the set of white vertices will be denoted $V^{(W)}$.  Associated to the checkerboard square lattice is the medial rapidity lattice consisting of horizontal and vertical directed rapidity lines.   Rapidity variables $p$ and $p'$ are assigned alternately to horizontally directed rapidity lines, and rapidity variables $q$ and $q'$ are assigned alternately to vertically directed rapidity lines, as indicated in Figure \ref{fig-lattice}.

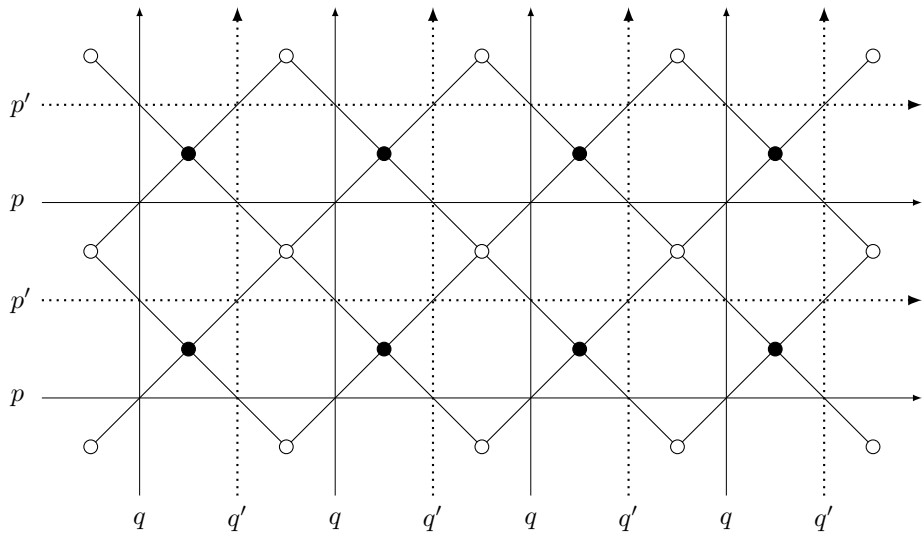
\begin{figure}[htb]
\centering
\begin{tikzpicture}[scale=1.1]

%\draw[dashed,step=1.0] (0,0) grid  (3.5,3.5);
\draw[-] (-0.5,-0.5)--(3.5,3.5);
\draw[-] (-0.5,3.5)--(3.5,-0.5);
\draw[-] (-0.5,1.5)--(1.5,3.5)--(3.5,1.5)--(1.5,-0.5)--(-0.5,1.5);
\draw[-] (-4.5,-0.5)--(-0.5,3.5);
\draw[-] (-4.5,3.5)--(-0.5,-0.5);
\draw[-] (-4.5,1.5)--(-2.5,3.5)--(-0.5,1.5)--(-2.5,-0.5)--(-4.5,1.5);
%\fill[white!] (0,4.5) circle (0.1pt);
\foreach \x in {-4,-2,...,2}{
\draw[-latex,very thin] (\x,-1) -- (\x,4);
\fill[white!] (\x,-1) circle (0.08pt)
node[below=2.4pt]{\color{black}\small$q$};}
\foreach \x in {-3,-1,...,3}{
\draw[-latex,thick,dotted] (\x,-1) -- (\x,4);
\fill[white!] (\x,-1) circle (0.08pt)
node[below=0.05pt]{\color{black}\small$q'$};}
\foreach \y in {1,3}{
\draw[-latex,thick,dotted] (-5,\y) -- (4,\y);
\fill[white!] (-5,\y) circle (0.08pt)
node[left=0.05pt]{\color{black}\small$p'$};}
\foreach \y in {0,2}{
\draw[-latex,very thin] (-5,\y) -- (4,\y);
\fill[white!] (-5,\y) circle (0.08pt)
node[left=2.9pt]{\color{black}\small$p$};}
\foreach \y in {-0.5,1.5,3.5}{
\foreach \x in {-4.5,-2.5,...,3.5}{
\filldraw[fill=white,draw=black] (\x,\y) circle (2.0pt);}}
\foreach \y in {0.5,2.5}{
\foreach \x in {-3.5,-1.5,...,2.5}{
\filldraw[fill=black,draw=black] (\x,\y) circle (2.0pt);}}

%\foreach \x in {0,...,3}
%\foreach \y in {0,...,3}
%\fill (\x,\y) circle (3.0pt);

\end{tikzpicture}

\caption{Checkerboard square lattice and directed rapidity lines.}
\label{fig-lattice}
\end{figure}

To each vertex $i\in V^{(B)}\cup V^{(W)}$ is assigned an $n$-component spin variable
\begin{equation}\label{spinvals}
\spn_i=\bigl((\cspn_i)_1,\ldots,(\cspn_i)_n\bigr)\in\mathbb{R}^n,
\end{equation}
subject to the following constraint on components
\begin{equation}
\sum_{a=1}^n(\cspn_i)_a=0.
\end{equation}
Thus a spin has $n-1$ independent components and $n=2$ corresponds to the scalar case.

To specify the model, Boltzmann weights will be assigned which characterise the interactions between nearest-neighbour pairs of spins that are connected by edges of the lattice.  As seen in Figure \ref{fig-lattice}, the intersection of two rapidity lines distinguish four different types of edges of the checkerboard square lattice which are shown in Figure \ref{fig-crosses}.  The corresponding sets of four different types of edges in the checkerboard square lattice will be respectively denoted $E^{(i)}$, $i=1,2,3,4$, as indicated in Figure \ref{fig-crosses}.

\begin{figure}[htb]
\centering
\begin{tikzpicture}[scale=2.2]
\draw[-,thick] (-0.5,2)--(0.5,2);
\draw[->,thick,dotted] (0.4,1.6)--(-0.4,2.4);
\fill[white!] (0.4,1.6) circle (0.01pt)
node[below=0.5pt]{\color{black}\small$q'$};
\draw[->] (-0.4,1.6)--(0.4,2.4);
\fill[white!] (-0.4,1.6) circle (0.01pt)
node[below=2.4pt]{\color{black}\small$p$};
\filldraw[fill=black,draw=black] (-0.5,2) circle (0.8pt)
node[left=3pt]{\color{black}\small $i$};
\filldraw[fill=white,draw=black] (0.5,2) circle (0.8pt)
node[right=3pt]{\color{black}\small $j$};

\fill (0,1.1) circle(0.01pt)
node[below=0.05pt]{\color{black} $E^{(1)}$};
% node[below=0.05pt]{\color{black} $W_{p-q'}(\spn_i,\spn_j)$};

\begin{scope}[xshift=55pt]
\draw[-,thick] (-0.5,2)--(0.5,2);
\draw[->] (0.4,1.6)--(-0.4,2.4);
\fill[white!] (0.4,1.6) circle (0.01pt)
node[below=3.3pt]{\color{black}\small$q$};
\draw[->,thick,dotted] (-0.4,1.6)--(0.4,2.4);
\fill[white!] (-0.4,1.6) circle (0.01pt)
node[below=0.5pt]{\color{black}\small$p'$};
\filldraw[fill=white,draw=black] (-0.5,2) circle (0.8pt)
node[left=3pt]{\color{black}\small $j$};
\filldraw[fill=black,draw=black] (0.5,2) circle (0.8pt)
node[right=3pt]{\color{black}\small $i$};

\fill (0,1.1) circle(0.01pt)
node[below=0.05pt]{\color{black} $E^{(2)}$};
% node[below=0.05pt]{\color{black} $W_{p'-q}(\spn_i,\spn_j)$};
\end{scope}

\begin{scope}[xshift=105pt,yshift=57pt]
\draw[-,thick] (0,-0.5)--(0,0.5);
\draw[->] (-0.4,-0.4)--(0.4,0.4);
\fill[white!] (-0.4,-0.4) circle (0.01pt)
node[below=3.2pt]{\color{black}\small$p$};
\draw[->] (0.4,-0.4)--(-0.4,0.4);
\fill[white!] (0.4,-0.4) circle (0.01pt)
node[below=3.2pt]{\color{black}\small$q$};
\filldraw[fill=white,draw=black] (0,-0.5) circle (0.8pt)
node[below=4pt]{\color{black}\small $j$};
\filldraw[fill=black,draw=black] (0,0.5) circle (0.8pt)
node[above=4pt]{\color{black}\small $i$};

\fill (0,-0.9) circle(0.01pt)
node[below=0.05pt]{\color{black} $E^{(3)}$};
% node[below=0.05pt]{\color{black} $\ow_{p-q}(\spn_j,\spn_i)$};
\end{scope}

\begin{scope}[xshift=150pt,yshift=57pt]
\draw[-,thick] (0,-0.5)--(0,0.5);
\draw[->,thick,dotted] (-0.4,-0.4)--(0.4,0.4);
\fill[white!] (-0.4,-0.4) circle (0.01pt)
node[below=0.5pt]{\color{black}\small$p'$};
\draw[->,thick,dotted] (0.4,-0.4)--(-0.4,0.4);
\fill[white!] (0.4,-0.4) circle (0.01pt)
node[below=0.5pt]{\color{black}\small$q'$};
\filldraw[fill=black,draw=black] (0,-0.5) circle (0.8pt)
node[below=4pt]{\color{black}\small $i$};
\filldraw[fill=white,draw=black] (0,0.5) circle (0.8pt)
node[above=4pt]{\color{black}\small $j$};

\fill (0,-0.9) circle(0.01pt)
node[below=0.05pt]{\color{black} $E^{(4)}$};
% node[below=0.05pt]{\color{black} $\ow_{p'-q'}(\spn_j,\spn_i)$};
\end{scope}

\end{tikzpicture}
\caption{Four different types of edges belonging to the sets $E^{(1)}$, $E^{(2)}$, $E^{(3)}$, and $E^{(4)}$, respectively.}
\label{fig-crosses}
\end{figure}
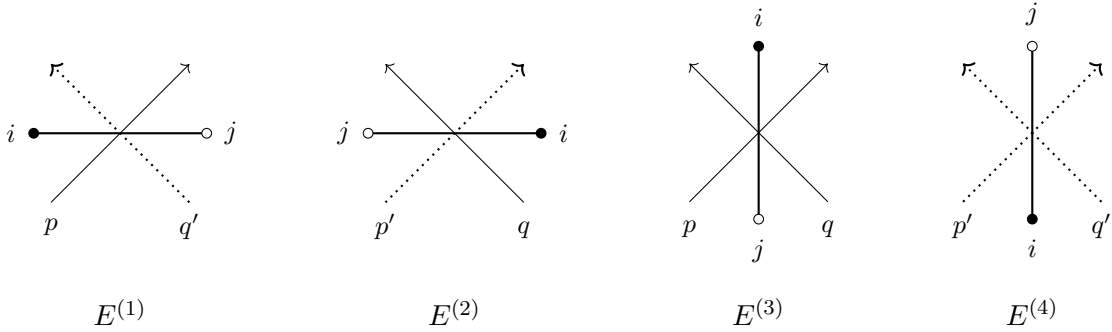

Four different expressions for the Boltzmann weights will be assigned to these different edges.  Namely, an edge $(ij)\in E^{(k)}$ connecting two vertices $i\in V^{(B)}$ and $j\in V^{(W)}$ is assigned a Boltzmann weight
\begin{equation}\label{BW-crosses}
W_{p-q'}(\spn_i,\spn_j),\quad W_{p'-q}(\spn_i,\spn_j), \quad
\ow_{p-q}(\spn_j,\spn_i),\quad \ow_{p'-q'}(\spn_j,\spn_i),
\end{equation}
for $k=1,2,3,4$, respectively.  Finally, another Boltzmann weight denoted
\begin{equation}
S(\spn_i),
\end{equation} 
is assigned to each vertex $i\in V^{(B)}\cup V^{(W)}$.  To be considered a physical model of statistical mechanics the Boltzmann weights should take real and positive values, however, for the purposes of this paper this is not needed.  Note that it is typically the case (at least for integrable cases) that the Boltzmann weights satisfy
\begin{equation}\label{BWinv}
W_\alpha(\spn_i,\spn_j)W_{-\alpha}(\spn_j,\spn_i)=1,
\end{equation}
and are related by
\begin{equation}\label{BWeta}
\ow_{\alpha}(\spn_i,\spn_j)=W_{\eta-\alpha}(\spn_i,\spn_j),
\end{equation}
for some parameter $\eta$ known as the crossing parameter.

Let $V^{(int)}$ denote the set of vertices interior to the lattice.  For example, in the lattice of Figure \ref{fig-lattice} the interior vertices each have four nearest neighbours and the remaining vertices are on the boundary.  Then the partition function of the model $Z$ may formally be written as
\begin{equation}
\label{Zdef}
\begin{split}
Z = \int_{\mathbb{R}^{n-1}}\cdots\int_{\mathbb{R}^{n-1}}\prod_{k\in V^{(int)}} d\spn_k \prod_{i\in V^{(B)}\cup V^{(W)}}S(\spn_i) 
\prod_{(ij)\in E^{(1)}} W_{p-q'}(\spn_i,\spn_j)
\prod_{(ij)\in E^{(2)}} W_{p'-q}(\spn_i,\spn_j)\phantom{,} \\
\times\prod_{(ij)\in E^{(3)}} \ow_{p-q}(\spn_j,\spn_i)
\prod_{(ij)\in E^{(4)}} \ow_{p'-q'}(\spn_j,\spn_i),
\end{split}
\end{equation}
where $d\spn_k$ denotes $\prod_{a=1}^{n-1}d(\spn_k)_a$ and the boundary spins are kept fixed.

The above is a rather general construction of a multicomponent spin edge-interaction lattice model of statistical mechanics and closely follows the construction used by Bazhanov and Sergeev for their solution of the star-star relation given in terms of the elliptic gamma function \cite{Bazhanov:2011mz}. Specific expressions for Boltzmann weights used in this paper will be presented in the next section, which give the hyperbolic analogue of \cite{Bazhanov:2011mz}.  
It remains to introduce the star-star relation as a condition of integrability for the above model, and this will be considered in the remainder of this section.  %It can be useful to reinterpret the above edge-interaction model as either an interaction-round-a-face (IRF) model or a vertex model (examples of such models can be found in \cite{Baxter:1982zz}).  
%In the remainder of this section integrability of the above model will be considered in t.

\subsection{Star-star relation}
Each interior white (black) vertex in the square lattice of Figure \ref{fig-lattice} is connected by four edges to four different black (white) vertices.  The corresponding two four-edge `star' configurations of vertices and edges is shown in Figure \ref{fig-IRF}.  These star diagrams are assigned the following IRF Boltzmann weights according to Figure \ref{fig-crosses} and \eqref{BW-crosses}
\begin{align}
\label{V1}
%\ds \bW_{\bp\bq}^{(1)}\left(\begin{array}{cc}\bolda & \boldb\\ \boldc & \boldd\end{array}\right)=\ds\int d\x\,\overline{\iW}_{p-q}(\boldc,\x)\,\overline{\iW}_{p'-q'}(\boldb,\x)\,\iW_{p'-q}(\x,\bolda)\,\iW_{p-q'}(\x,\boldd)
\ds \bW_{\bp\bq}^{(B)}(\spn_a,\spn_b,\spn_c,\spn_d)=\ds\int_{\mathbb{R}^{n-1}} d\spn_i S(\spn_i)\,\overline{\iW}_{p-q}(\spn_c,\spn_i)\,\overline{\iW}_{p'-q'}(\spn_b,\spn_i)\,\iW_{p'-q}(\spn_i,\spn_a)\,\iW_{p-q'}(\spn_i,\spn_d), \\
% \end{align}
% %and
% \begin{align}
\label{V2}
%\ds \bW_{\bp\bq}^{(2)}\left(\begin{array}{cc}\bolda & \boldb\\ \boldc &\boldd\end{array}\right)=\ds\int d\y\, \overline{\iW}_{p-q}(\y,\boldb)\,  \overline{\iW}_{p'-q'}(\y,\boldc)\,\iW_{p'-q}(\boldd,\y)\,\iW_{p-q'}(\bolda,\y)\;.
\ds \bW_{\bp\bq}^{(W)}(\spn_a,\spn_b,\spn_c,\spn_d)=\int_{\mathbb{R}^{n-1}} d\spn_i S(\spn_i)\, \overline{\iW}_{p-q}(\spn_i,\spn_b)\,  \overline{\iW}_{p'-q'}(\spn_i,\spn_c)\,\iW_{p'-q}(\spn_d,\spn_i)\,\iW_{p-q'}(\spn_a,\spn_i),
\end{align}
for the diagram on the left and right of Figure \ref{fig-IRF} respectively.

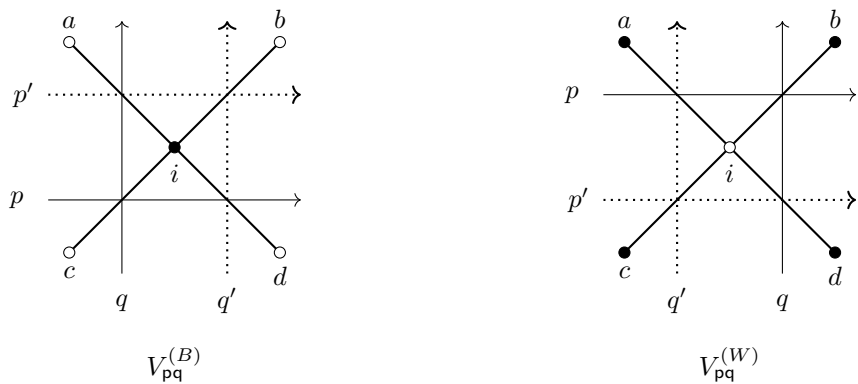
\begin{figure}[htb]
\centering
\begin{tikzpicture}[scale=1.3]

\draw[-,thick] (-1,-1)--(1,1);
\draw[-,thick] (1,-1)--(-1,1);

\filldraw[fill=white!,draw=black!] (-1,-1) circle (1.5pt)
node[below=1.5pt]{\color{black}\small$c$};
\filldraw[fill=white!,draw=black!] (1,-1) circle (1.5pt)
node[below=1.5pt]{\color{black}\small$d$};
\filldraw[fill=white!,draw=black!] (-1,1) circle (1.5pt)
node[above=1.5pt]{\color{black}\small$a$};
\filldraw[fill=white!,draw=black!] (1,1) circle (1.5pt)
node[above=1.5pt]{\color{black}\small$b$};

\filldraw[fill=black!,draw=black!] (0,0) circle (1.5pt)
node[below=2.5pt]{\color{black}\small$i$};

\draw[->,thick,dotted] (-1.2,0.5) -- (1.2,0.5);
\draw[white!] (-1.2,0.5) circle (0.01pt)
node[left=1.5pt]{\color{black}\small$p'$};
\draw[->] (-1.2,-0.5) -- (1.2,-0.5);
\draw[white!] (-1.3,-0.5) circle (0.01pt)
node[left=1.5pt]{\color{black}\small$p$};
\draw[->] (-0.5,-1.2) -- (-0.5,1.2);
\draw[white!] (-0.5,-1.27) circle (0.01pt)
node[below=1.5pt]{\color{black}\small$q$};
\draw[->,thick,dotted] (0.5,-1.2) -- (0.5,1.2);
\draw[white!] (0.5,-1.2) circle (0.01pt)
node[below=1.5pt]{\color{black}\small$q'$};

\draw[white!] (0,-1.8) circle (0.01pt)
node[below=0.1pt]{\color{black}\small$\bW^{(B)}_{\bp\bq}$};

\begin{scope}[xshift=150pt]

\draw[-,thick] (-1,-1)--(1,1);
\draw[-,thick] (1,-1)--(-1,1);

\filldraw[fill=black!,draw=black!] (-1,-1) circle (1.5pt)
node[below=1.5pt]{\color{black}\small$c$};
\filldraw[fill=black!,draw=black!] (1,-1) circle (1.5pt)
node[below=1.5pt]{\color{black}\small$d$};
\filldraw[fill=black!,draw=black!] (-1,1) circle (1.5pt)
node[above=1.5pt]{\color{black}\small$a$};
\filldraw[fill=black!,draw=black!] (1,1) circle (1.5pt)
node[above=1.5pt]{\color{black}\small$b$};

\filldraw[fill=white!,draw=black!] (0,0) circle (1.5pt)
node[below=2.5pt]{\color{black}\small$i$};

\draw[->] (-1.2,0.5) -- (1.2,0.5);
\draw[white!] (-1.3,0.5) circle (0.01pt)
node[left=1.5pt]{\color{black}\small$p$};
\draw[->,thick,dotted] (-1.2,-0.5) -- (1.2,-0.5);
\draw[white!] (-1.2,-0.5) circle (0.01pt)
node[left=1.5pt]{\color{black}\small$p'$};
\draw[->,thick,dotted] (-0.5,-1.2) -- (-0.5,1.2);
\draw[white!] (-0.5,-1.2) circle (0.01pt)
node[below=1.5pt]{\color{black}\small$q'$};
\draw[->] (0.5,-1.2) -- (0.5,1.2);
\draw[white!] (0.5,-1.27) circle (0.01pt)
node[below=1.5pt]{\color{black}\small$q$};

\draw[white!] (0,-1.8) circle (0.01pt)
node[below=0.1pt]{\color{black}\small$\bW^{(W)}_{\bp\bq}$};

\end{scope}
\end{tikzpicture}

\caption{The two types of four-edge stars that appear in the checkerboard square lattice.}
\label{fig-IRF}
\end{figure}

% The checkerboard square lattice may be regarded as being constructed via periodic translations of either of the star diagrams shown in Figure \ref{fig-IRF}.

%The above two expressions are connected by the so-called {\em star-star\/} relation \cite{Bax86,BB92}, which in our case reads

A sufficient condition of integrability for the edge-interaction model on the checkerboard lattice is that the Boltzmann weights satisfy the star-star relation, which implies that the transfer matrices of the model commute \cite{Baxter:1997tn}.  In terms of the two types of IRF Boltzmann weights \eqref{V1} and \eqref{V2}, the expression for the star-star relation for this model may be written as
\begin{align}
\label{ssrdef}
%{\iW_{q'-q}(\boldd,\boldc)\,\iW_{p'-p}(\boldd,\boldb)} \,\,\bW_{\bp\bq}^{(1)}\left(\begin{array}{cc}\bolda & \boldb \\\boldc & \boldd\end{array}\right)\;=\;{\iW_{q'-q}(\boldb,\bolda)\,\iW_{p'-p}(\boldc,\bolda)}\,\,\bW_{\bp\bq}^{(2)}\left(\begin{array}{cc}\bolda & \boldb \\\boldc & \boldd\end{array}\right)\;.
% \iW_{q'-q}(\spn_d,\spn_c)\,\iW_{p'-p}(\spn_d,\spn_b) \,\,\bW_{\bp\bq}^{(B)}(\spn_a,\spn_b,\spn_c,\spn_d)\;=\;\iW_{q'-q}(\spn_b,\spn_a)\,\iW_{p'-p}(\spn_c,\spn_a)\,\,\bW_{\bp\bq}^{(W)}(\spn_a,\spn_b,\spn_c,\spn_d).
\iW_{q'-q}(\spn_d,\spn_c)\,\iW_{q-q'}(\spn_a,\spn_b) \,\bW_{\bp\bq}^{(B)}(\spn_a,\spn_b,\spn_c,\spn_d)\,=\,\iW_{p-p'}(\spn_b,\spn_d)\,\iW_{p'-p}(\spn_c,\spn_a)\,\bW_{\bp\bq}^{(W)}(\spn_a,\spn_b,\spn_c,\spn_d).
\end{align}
This relation may also be regarded as a duality transformation between the two configurations of edges shown in Figure \ref{fig-IRF}, up to some prefactors. These prefactors are products of edge Boltzmann weights that do not have corresponding edges that appear in the checkerboard square lattice.  In fact, the edge Boltzmann weights that appear on the left hand side would be associated to edges that connect two black vertices, and those on the right hand side would be associated to edges that connect two white vertices. Taking into account only the crossing of rapidities shown in Figure \ref{fig-crosses} and corresponding assignment of Boltzmann weights in \eqref{BW-crosses}, the above expression for the star-star relation is an equality for Boltzmann weights assigned to the configurations of edges and rapidity lines shown in Figure \ref{fig-ssr}.

\begin{figure}[htb]
\centering
\begin{tikzpicture}[scale=1.3]

\draw[-,thick] (-1,-1)--(1,1);
\draw[-,thick] (1,-1)--(-1,1);
\draw[-,thick] (-1,1)--(1,1);
\draw[-,thick] (-1,-1)--(1,-1);

\filldraw[fill=white!,draw=black!] (-1,-1) circle (1.5pt)
node[below=1.5pt]{\color{black}\small$c$};
\filldraw[fill=white!,draw=black!] (1,-1) circle (1.5pt)
node[below=1.5pt]{\color{black}\small$d$};
\filldraw[fill=white!,draw=black!] (-1,1) circle (1.5pt)
node[above=1.5pt]{\color{black}\small$a$};
\filldraw[fill=white!,draw=black!] (1,1) circle (1.5pt)
node[above=1.5pt]{\color{black}\small$b$};

\filldraw[fill=black!,draw=black!] (0,0) circle (1.5pt)
node[below=2.5pt]{\color{black}\small$i$};

\draw[->,thick,dotted] (-1.7,0.5) -- (1.7,0.5);
\draw[white!] (-1.7,0.5) circle (0.01pt)
node[left=1.5pt]{\color{black}\small$p'$};
\draw[->] (-1.7,-0.5) -- (1.7,-0.5);
\draw[white!] (-1.8,-0.5) circle (0.01pt)
node[left=1.5pt]{\color{black}\small$p$};
\draw[->,thick,dotted] (-0.5,-1.7) .. controls (-0.5,-1.6) and (-0.4,-1.2) .. (0,-1) .. controls (0.2,-0.9) and (0.5,-0.6) .. (0.5,-0.5) .. controls (0.7,-0.4) and (0.7,0.4) .. (0.5,0.5) .. controls (0.5,0.6) and (0.2,0.9) .. (0,1) .. controls (-0.4,1.2) and (-0.5,1.6) .. (-0.5,1.7);
\draw[white!] (-0.5,-1.7) circle (0.01pt)
node[below=1.5pt]{\color{black}\small$q'$};
\draw[->] (0.5,-1.7) .. controls (0.5,-1.6) and (0.4,-1.2) .. (0,-1) .. controls (-0.2,-0.9) and (-0.5,-0.6) .. (-0.5,-0.5) .. controls (-0.7,-0.4) and (-0.7,0.4) .. (-0.5,0.5) .. controls (-0.5,0.6) and (-0.2,0.9) .. (0,1) .. controls (0.4,1.2) and (0.5,1.6) ..  (0.5,1.7);
\draw[white!] (0.5,-1.77) circle (0.01pt)
node[below=1.5pt]{\color{black}\small$q$};

%\draw[white!] (0,-1.8) circle (0.01pt)
%node[below=0.1pt]{\color{black}\small$\bW^{(1)}_{\bp\bq}$};

\draw[white!] (2.2,0) circle (0.01pt)
node[right=0.1pt]{\color{black}=};

\begin{scope}[xshift=150pt]

\draw[-,thick] (-1,-1)--(1,1);
\draw[-,thick] (1,-1)--(-1,1);
\draw[-,thick] (-1,-1)--(-1,1);
\draw[-,thick] (1,-1)--(1,1);

\filldraw[fill=black!,draw=black!] (-1,-1) circle (1.5pt)
node[below=1.5pt]{\color{black}\small$c$};
\filldraw[fill=black!,draw=black!] (1,-1) circle (1.5pt)
node[below=1.5pt]{\color{black}\small$d$};
\filldraw[fill=black!,draw=black!] (-1,1) circle (1.5pt)
node[above=1.5pt]{\color{black}\small$a$};
\filldraw[fill=black!,draw=black!] (1,1) circle (1.5pt)
node[above=1.5pt]{\color{black}\small$b$};

\filldraw[fill=white!,draw=black!] (0,0) circle (1.5pt)
node[below=2.5pt]{\color{black}\small$j$};

\draw[->,thick,dotted] (-1.7,0.5) .. controls (-1.6,0.5) and (-1.2,0.4) .. (-1,0) .. controls (-0.9,-0.2) and (-0.6,-0.5) .. (-0.5,-0.5) .. controls (-0.4,-0.7) and (0.4,-0.7) .. (0.5,-0.5) .. controls (0.6,-0.5) and (0.9,-0.2) .. (1,0) .. controls (1.2,0.4) and (1.6,0.5) .. (1.7,0.5);
\draw[white!] (-1.7,0.5) circle (0.01pt)
node[left=1.5pt]{\color{black}\small$p'$};
\draw[->] (-1.7,-0.5) .. controls (-1.6,-0.5) and (-1.2,-0.4) .. (-1,0) .. controls (-0.9,0.2) and (-0.6,0.5) .. (-0.5,0.5) .. controls (-0.4,0.7) and (0.4,0.7) .. (0.5,0.5) .. controls (0.6,0.5) and (0.9,0.2) .. (1,0) .. controls (1.2,-0.4) and (1.6,-0.5) .. (1.7,-0.5);
\draw[white!] (-1.8,-0.5) circle (0.01pt)
node[left=1.5pt]{\color{black}\small$p$};
\draw[->,thick,dotted] (-0.5,-1.7) -- (-0.5,1.7);
\draw[white!] (-0.5,-1.7) circle (0.01pt)
node[below=1.5pt]{\color{black}\small$q'$};
\draw[->] (0.5,-1.7) -- (0.5,1.7);
\draw[white!] (0.5,-1.77) circle (0.01pt)
node[below=1.5pt]{\color{black}\small$q$};

%\draw[white!] (0,-1.8) circle (0.01pt)
%node[below=0.1pt]{\color{black}\small$\bW^{(2)}_{\bp\bq}$};

\end{scope}
\end{tikzpicture}

\caption{Star-star relation}
\label{fig-ssr}
\end{figure}
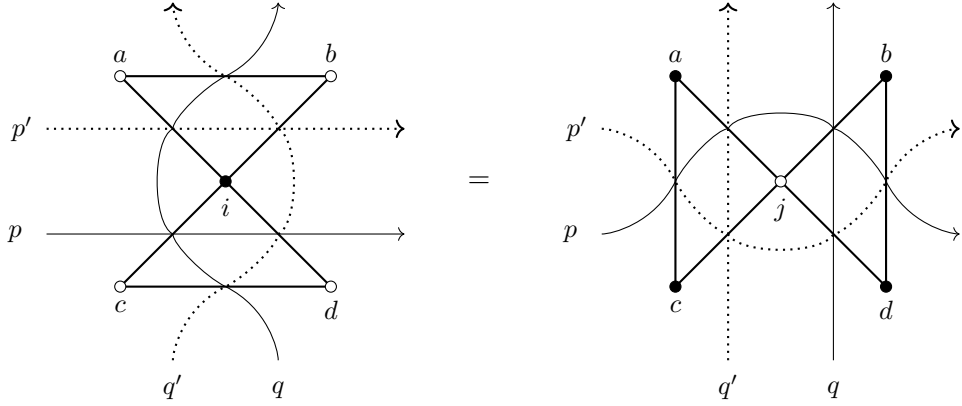

The star-star relation is an equation that involves six $n$-component spin variables, where the four `boundary' spins $\spn_a,\spn_b,\spn_c,\spn_d$ are kept fixed, and the two `interior' spins (one on the left hand side and one on the right hand side) are integrated over their components.  Thus the star-star relation may also be regarded simply as the equality of partition functions for simple models of statistical mechanics associated to the two configurations shown on the left and right hand sides of Figure \ref{fig-ssr}.

Finally, consider the following renormalised IRF Boltzmann weight
\begin{equation}
\label{weight-IRF}
\begin{split}
\bW_{\bp\bq}(\spn_a,\spn_b,\spn_c,\spn_d)=\left(\frac{\ds\iW_{q'-q}(\spn_d,\spn_c)\,\iW_{p'-p}(\spn_d,\spn_b)}{\ds\iW_{q'-q}(\spn_b,\spn_a)\,  \iW_{p'-p}(\spn_c,\spn_a)}\right)^{\frac{1}{2}}\, \bW_{\bp\bq}^{(1)}(\spn_a,\spn_b,\spn_c,\spn_d)\phantom{,} \\
=\left(\frac{\ds\iW_{q'-q}(\spn_b,\spn_a) \, \iW_{p'-p}(\spn_c,\spn_a)}{\ds\iW_{q'-q}(\spn_d,\spn_c) \,\iW_{p'-p}(\spn_d,\spn_b)}\right)^{\frac{1}{2}}\ \bW_{\bp\bq}^{(2)}(\spn_a,\spn_b,\spn_c,\spn_d),
\end{split}
\end{equation}
where the second equality follows as a consequence of the star-star relation \eqref{ssrdef} and the inversion relation \eqref{BWinv}.
The star-star relation \eqref{ssrdef} implies the following form of the Yang-Baxter equation for IRF weights \cite{Baxter:1997tn,Bazhanov:2011mz}
\begin{equation}
\label{YBE-IRF}
\begin{split}
\ds \int_{\mathbb{R}^{n-1}} d\spn_i\, S(\spn_i)\;\bW_{\bp\bq}(\spn_c,\spn_i,\spn_e,\spn_d)\,\bW_{\bp\br}(\spn_i,\spn_b,\spn_d,\spn_f)\,\bW_{\bq\br}(\spn_c,\spn_g,\spn_i,\spn_b)\phantom{,} \\ %\phantom{xxxxxxxx} \\[0.3cm]
=\ds \int_{\mathbb{R}^{n-1}} d\spn_j\, S(\spn_j)\;\bW_{\bq\br}(\spn_e,\spn_j,\spn_d,\spn_f)\,\bW_{\bp\br}(\spn_c,\spn_g,\spn_e,\spn_j)\,\bW_{\bp\bq}(\spn_g,\spn_b,\spn_j,\spn_f).
\end{split}
\end{equation}
This is an equation for 14 spin $n$-component spin variables, where the six boundary spins $\spn_b,\spn_c,\spn_d,\spn_e,\spn_f,\spn_g$ are fixed, and the eight interior spins (four on the left hand side and four on the right hand side) are integrated over their components.

%

% In this paper, a set of complex variables $t_1,\ldots,t_n\in\mathbb{C}$ will frequently be represented as a vector
% \begin{equation}
% \vect=(t_1,\ldots,t_n),
% \end{equation}
% and addition with a complex number $\gamma\in\mathbb{C}$ is given by
% \begin{equation}
% \gamma+\vect:=(\gamma+t_1,\ldots,\gamma+t_n).
% \end{equation}
% An analogous notation also applies to sets of integer variables $a_1,\ldots,a_n\in\mathbb{Z}$, and addition with integers.

\section{Multicomponent lattice model with hyperbolic Boltzmann weights}\label{sec:hyperbolic}

%\subsection{Hyperbolic gamma function}

% The function needed at the hyperbolic level is known as the hyperbolic gamma function, %(also non-compact quantum dilogarithm, or double sine function),\footnote{The name typically refers to the convention used for the function \eqref{HGF}.  The convention used here appeared in \cite{Ruijsenaars:1997:FOA}, and the different conventions are related by simple changes of variables,  {\it e.g.}, as outlined in \cite{BultThesis,Rains2009,Spiridonov:2010em}.} and is
% defined by 
In this section, specific Boltzmann weights will be given in terms of the hyperbolic gamma function/non-quantum compact dilogarithm, which is defined here by \cite{Faddeev:1994fw,Ruijsenaars:1997:FOA}
\begin{equation}\label{HGF}
    \Gamma_h(z;\bb)=\exp\left(\int_0^\infty\frac{dx}{x}\Bigl(\frac{\ii z}{x}-\frac{\sinh(2\ii zx)}{2\sinh(x\bb)\sinh(x\bb^{-1})}\Bigr)\right),\quad |\im(z)|<\re(\eta_h),
\end{equation}
where $\eta_h$ will be referred to as the crossing parameter, defined by
\begin{equation}
\label{hypeta}
\eta_h=\frac{\bb+\bb^{-1}}{2}.
\end{equation}
For the purposes here, the parameter $\bb$ will take positive values
\begin{equation}
\label{bbdef}
\bb>0,
\end{equation}
such that the crossing parameter \eqref{hypeta} is also positive.  
%
%The function \eqref{HGF} has appeared in different forms, the most common ones probably being multiple sine function, non-compact quantum dilogarithm, and hyperbolic gamma function.  Each of these functions are related by simple changes of variables, where the convention \eqref{HGF} is commonly referred to as the hyperbolic gamma function.
%
The function \eqref{HGF} has appeared previously in several different but related forms including the multiple sine functions \cite{Barnes:1901,Barnes1904,shintani1976,kurokawa1991} %studied by Barnes \cite{Barnes:1901,Barnes1904}, Shintani \cite{shintani1976}, Kurokawa \cite{kurokawa1991}, 
the non-compact quantum dilogarithm \cite{Faddeev:1994fw,FaddeevKashaev}, and the hyperbolic gamma function \cite{Ruijsenaars:1997:FOA}.

The hyperbolic gamma function \eqref{HGF} satisfies the inversion relation
\begin{equation}
\label{HGFidents}
\Gamma_h(z;\bb)=\frac{1}{\Gamma_h(-z;\bb)},
\end{equation}
and difference equations
\begin{equation}
\label{HGFdif}
\displaystyle
\frac{\Gamma_h(z-\ii\bb;\bb)}{\Gamma_h(z;\bb)}=2\cosh(\pi(2z-\ii\bb^{-1})/(2\bb)),\quad\frac{\Gamma_h(z-\ii\bb^{-1};\bb)}{\Gamma_h(z;\bb)}=2\cosh(\pi(2z-\ii\bb)\bb/2),
\end{equation}
The latter relations may be used to extend \eqref{HGF} as a meromorphic function on $z\in\mathbb{C}$ \cite{Ruijsenaars:1997:FOA}.
%$\mathbb{C}-\ii\bb\mathbb{Z}_{\geq0}-\ii\bb^{-1}\mathbb{Z}_{\geq0}-\{\eta\}$ \cite{Ruijsenaars:1997:FOA}.

\subsection{Star-star relation}

Let $\spn_i=\bigl((\cspn_i)_1,\ldots,(\cspn_i)_n\bigr)$ denote an $n$-component spin variable, and $p_a,q_b$, $a,b=1,2$, denote rapidity parameters, which respectively take values
\begin{equation}
\label{hypvals}
\spn_i\in\mathbb{R}^n,\quad\sum_{i=0}^n(\cspn_i)_j=0,\qquad 0<p_a-q_b<\eta_h,\quad a,b=1,2.
\end{equation}
%where the notation $(\cspn_i)_j$ denotes the jth component of the spin variable $\spn_i$.

% For convenience, the Boltzmann weights will be defined in terms of the following shifted lens hyperbolic gamma function
% \begin{equation}
%     \slhgf(z,m)=\Gamma_h(\ii\eta_h-z,-m,\ii\bb,-\ii\bb).
% \end{equation}
The Boltzmann weights are defined in terms of the hyperbolic gamma function by
\begin{equation}
\begin{split}
S(\spn_i)=&\prod_{1\leq a<b\leq n}\Gamma_h(-\ii\eta_h+(\cspn_i)_a-(\cspn_i)_b)\Gamma_h(-\ii\eta_h-(\cspn_i)_a+(\cspn_i)_b) \\
=&%2^{n(n-1)}
\prod_{1\leq a<b\leq n}4\sinh\Bigl(\pi\bigl((\cspn_i)_a-(\cspn_i)_b\bigr)\bb^{-1}\Bigr)
\sinh\Bigl(\pi\bigl((\cspn_i)_a-(\cspn_i)_b\bigr)\bb\Bigr),
\end{split}
\end{equation}
and
\begin{equation}
    W_{p-q}(\spn_i,\spn_j)=\prod_{a,b=1}^n\Gamma_h((\cspn_i)_a-(\cspn_i)_b+\ii(p-q)), \qquad
    \oW_{p-q}(\spn_i,\spn_j)=W_{\eta_h-(p-q)}(\spn_i,\spn_j).
\end{equation}

By definition, the above Boltzmann weights are each invariant under permutations of the components of the spin variables $\spn_i$, $\spn_j$.  The Boltzmann weights also satisfies the reflection symmetry
\begin{equation}
    W_{p-q}(\spn_i,\spn_j)W_{q-p}(\spn_j,\spn_i)=1.
\end{equation}

Finally, define the following IRF Boltzmann weights
\begin{equation}\label{IRFBWhyp}
\begin{split}
    W^{(B)}(\spn)=&%_{\bp\bq}\!\left(\arraycolsep=0.7pt\begin{array}{ll}\spn_i & \spn_j \\  \spn_k & \spn_l\end{array}\right)&=\!
    \int_{\mathbb{R}^{n-1}}\!\!d\spn_fS(\spn_f)W_{p_2-q_1}(\spn_f,\spn_i)\oW_{p_2-q_2}(\spn_j,\spn_f)\oW_{p_1-q_1}(\spn_k,\spn_f)W_{p_1-q_2}(\spn_f,\spn_l), \\[0.1cm]
    W^{(W)}(\spn)=&%_{\bp\bq}\!\left(\arraycolsep=0.7pt\begin{array}{ll}\spn_i & \spn_j \\  \spn_k & \spn_l\end{array}\right)&=\!
    \int_{\mathbb{R}^{n-1}}\!\!d\spn_gS(\spn_g)W_{p_1-q_2}(\spn_i,\spn_g)\oW_{p_1-q_1}(\spn_g,\spn_j)\oW_{p_2-q_2}(\spn_g,\spn_k)W_{p_2-q_1}(\spn_l,\spn_g),
\end{split}
\end{equation}
where $d\spn_f$ denotes $d(\cspn_f)_1\ldots d(\cspn_f)_{n-1}$.
%The star-star relation may be written as %\cite{Bazhanov:2011mz,Bazhanov:2013bh}
\begin{theorem}\label{thm:ssr}
The following star-star relation holds
\begin{equation}
\label{hypssr}
\begin{split}
W_{p_1-p_2}(\spn_i,\spn_k)W_{q_1-q_2}(\spn_i,\spn_j)W^{(B)}(\spn)%_{\bp\bq}\!\left(\arraycolsep=0.7pt\begin{array}{ll}\spn_i & \spn_j \\  \spn_k & \spn_l\end{array}\right)%\phantom{,} \\
=W_{p_1-p_2}(\spn_j,\spn_l)W_{q_1-q_2}(\spn_k,\spn_l)W^{(W)}(\spn).%_{\bp\bq}\!\left(\arraycolsep=0.7pt\begin{array}{ll}\spn_i & \spn_j \\  \spn_k & \spn_l\end{array}\right),
\end{split}
\end{equation}
% where
% \begin{equation}\label{IRFBWhyp}
% \begin{split}
%     W^{(B)}(\spn)=&%_{\bp\bq}\!\left(\arraycolsep=0.7pt\begin{array}{ll}\spn_i & \spn_j \\  \spn_k & \spn_l\end{array}\right)&=\!
%     \int_{\mathbb{R}^{n-1}}\!\!d\spn_fS(\spn_f)W_{p_2-q_1}(\spn_f,\spn_i)\oW_{p_2-q_2}(\spn_j,\spn_f)\oW_{p_1-q_1}(\spn_k,\spn_f)W_{p_1-q_2}(\spn_f,\spn_l), \\[0.1cm]
%     W^{(W)}(\spn)=&%_{\bp\bq}\!\left(\arraycolsep=0.7pt\begin{array}{ll}\spn_i & \spn_j \\  \spn_k & \spn_l\end{array}\right)&=\!
%     \int_{\mathbb{R}^{n-1}}\!\!d\spn_gS(\spn_g)W_{p_1-q_2}(\spn_i,\spn_g)\oW_{p_1-q_1}(\spn_g,\spn_j)\oW_{p_2-q_2}(\spn_g,\spn_k)W_{p_2-q_1}(\spn_l,\spn_g),
% \end{split}
% \end{equation}
% and $d\spn_f$ denotes $d(\cspn_f)_1\ldots d(\cspn_f)_{n-1}$.
\end{theorem}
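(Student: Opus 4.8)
\noindent\emph{Proof plan.} The plan is to substitute the explicit hyperbolic Boltzmann weights into both sides of \eqref{hypssr} and thereby recognise the identity as a known symmetry transformation for hyperbolic hypergeometric integrals of $A_n$-type, as established by Rains \cite{Rains2009}. Writing $W^{(B)}$ and $W^{(W)}$ from \eqref{IRFBWhyp} in terms of $\Gamma_h$, each side becomes an integral over the traceless hyperplane $\sum_a(\cspn_f)_a=0$ (respectively $\sum_a(\cspn_g)_a=0$), whose integrand is a product $\prod_{a,b=1}^n\Gamma_h(\,\cdot\,)$ of four ``towers'' of gamma functions together with the measure $S$, which by its second expression is the $A_{n-1}$ Vandermonde-type factor $\prod_{a<b}4\sinh(\pi(\,\cdot\,)\bb^{-1})\sinh(\pi(\,\cdot\,)\bb)$.

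First I would record the arguments of the four towers in $W^{(B)}$, reading $W_\alpha(\spn_i,\spn_j)=\prod_{a,b}\Gamma_h((\cspn_i)_a-(\cspn_j)_b+\ii\alpha)$ and converting the barred weights via $\oW_\alpha=W_{\eta_h-\alpha}$. Two towers carry the integration variable as first argument, namely $W_{p_2-q_1}(\spn_f,\spn_i)$ and $W_{p_1-q_2}(\spn_f,\spn_l)$ (contributing $+(\cspn_f)_a$), and two carry it as second argument, namely $\oW_{p_2-q_2}(\spn_j,\spn_f)$ and $\oW_{p_1-q_1}(\spn_k,\spn_f)$ (contributing $-(\cspn_f)_b$). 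The four rapidity shifts are then $p_2-q_1$, $\eta_h-p_2+q_2$, $\eta_h-p_1+q_1$ and $p_1-q_2$, whose sum equals $2\eta_h$; this is precisely the balancing condition required by Rains' transformation, and it holds identically. The same computation applied to $W^{(W)}$ produces the same four shifts but with the external spins $\spn_i,\spn_j,\spn_k,\spn_l$ reassigned among the towers.

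Next I would match this $2+2$ parameter structure to the $A_n\to A_n$ transformation of \cite{Rains2009}, under which an integral of this shape is mapped to a second integral with the external parameters redistributed exactly as between $W^{(B)}$ and $W^{(W)}$, at the cost of an explicit prefactor built from $\Gamma_h$. Using the inversion relation \eqref{HGFidents} to simplify the barred factors, I expect this prefactor to reorganise, again via $W_\alpha(\spn_i,\spn_j)=\prod_{a,b}\Gamma_h((\cspn_i)_a-(\cspn_j)_b+\ii\alpha)$, into precisely the ratio of the four crossing weights $W_{p_1-p_2}$ and $W_{q_1-q_2}$ appearing in \eqref{hypssr}, which yields the asserted equality. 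Convergence of both integrals is guaranteed on the domain $0<p_a-q_b<\eta_h$ fixed in \eqref{hypvals}, so no separate analytic continuation is needed; a from-scratch derivation would instead invoke the difference equations \eqref{HGFdif}, but reducing to Rains' identity is considerably cleaner.

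The main obstacle I anticipate is bookkeeping rather than analysis: correctly translating Rains' normalisation, integration domain and parameter conventions into the present notation, and verifying that the gamma prefactor produced by his transformation collapses \emph{precisely} into the four crossing weights with the correct spin arguments and orientations. A useful independent check is the scalar case $n=2$, where \eqref{hypssr} must reduce to the known scalar hyperbolic star-star relation of \cite{Kels:2018qzx}; matching that special case pins down all conventions and confirms the parameter dictionary before asserting the general $A_n$ statement.
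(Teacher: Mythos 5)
Your proposal is correct and takes essentially the same route as the paper: the paper also establishes the theorem by reduction to Rains' $A_n$ hypergeometric integral transformations via \cite{Rains2009}, the only difference being that it phrases this as the hyperbolic limit of the elliptic Bazhanov--Sergeev star-star relation proven in \cite{Bazhanov:2013bh}, whereas you invoke the hyperbolic $A_n$ transformation directly. Your bookkeeping is sound --- the four shifts $p_2-q_1$, $\eta_h-p_2+q_2$, $\eta_h-p_1+q_1$, $p_1-q_2$ do sum to $2\eta_h$ --- and the $n=2$ consistency check is a reasonable way to pin down conventions.
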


The univariate $n=2$ case of this star-star relation has been investigated in connection with supersymmetric gauge theories \cite{Catak:2021coz}, while for general $n\geq2$ the above star-star relation may be obtained from the hyperbolic limit \cite{Rains2009} of the elliptic solution of the star-star relation that was first obtained by Bazhanov and Sergeev \cite{Bazhanov:2011mz} and proven in \cite{Bazhanov:2013bh} by connecting it to identities of elliptic hypergometric integrals associated to the $A_n$ root system \cite{RainsT}. %In terms of hypergeometric functions this corresponds to the hyperbolic limit of elliptic hypergeometric integrals associated to the $A_n$ root system \cite{Rains2009}. %\APK{compare to Rains limits}

\subsection{Quasi-classical limit}\label{sec:QCL}

The quasi-classical limit of the univariate ($n=2$) case of \eqref{hypssr} has previously been shown to lead to 5-point difference equations which are consistent on a face-centered cubic unit cell \cite{Kels:2020zjn}.  Here mulitcomponent 5-point difference equations shall be derived from the quasi-classical limit of \eqref{hypssr} for general $n$.

First, define the following parameter
\begin{equation}
\label{hbar}
\hbar=2\pi \bb^2\;.
\end{equation}
A quasi-classical expansion may be taken by scaling the variables and parameters as \cite{Bazhanov:2007mh}
\begin{equation}
\label{qclas}
    \spn_i\to\frac{\bx_i}{\sqrt{2\pi\hbar}},\qquad p_j\to\frac{u_j}{\sqrt{2\pi\hbar}},\qquad q_j\to\frac{v_j}{\sqrt{2\pi\hbar}},
\end{equation}
and considering $\hbar\to0$.  The classical variables $\bx_i$, and parameters $u_j$, $v_j$, $j=1,2$, appearing above take values
\begin{equation}
\label{hypcvals}
\bx_i\in\mathbb{R}^n,\qquad 0<u_j-v_k<\pi,\quad j,k=1,2,
\end{equation}
%However the variables \eqref{hypcvals} can in principle take more general complex values, which requires analytic continuation through the deformation of the contours of \eqref{hypssr}.
and the sum-to-zero condition on variables \eqref{hypvals} will be fixed as
\begin{equation}
    \bx_i=\bigl((\cbx_i)_1,\ldots,(\cbx_i)_{n-1},-X\bigr),
\end{equation}
where
\begin{equation}\label{sumtozero}
    X=\sum_{a=1}^{n-1}(\cbx_i)_a,
\end{equation}
so that the independent components of $\bx_i$, are $(\cbx_i)_a$, $a=1,\ldots,n-1$.  Note that the condition on parameters in \eqref{hypcvals} that is inherited from \eqref{hypvals} is quite restrictive, but can effectively be dropped once the desired difference equations have been obtained from the quasi-classical limit. %, $n=2,3,\ldots$.  %This is important to find explicit expressions for the partial derivatives with respect to components of $\bx_i$.

% The quasi-classical limit is
% \begin{equation}
%     \spn_i\to\frac{\bx_i}{\sqrt{2\pi\hbar}},\qquad p_j\to\frac{u_j}{\sqrt{2\pi\hbar}},\qquad q_j\to\frac{v_j}{\sqrt{2\pi\hbar}},\qquad \hbar\to0.
% \end{equation}
In the quasi-classical limit \eqref{qclas} the leading asymptotics of \eqref{HGF} are given by \cite{FaddeevKashaev}
\begin{equation}
\label{HGFqcl}
\Log\Gamma_h(z(2\pi\bb)^{-1};\bb)=-\ii\hbar^{-1}\Bigl(\dilog(-\EXP^{z})+\frac{\pi^2}{12}-\frac{z^2}{4}\Bigr)+O(\hbar),\qquad \im(z)<\pi,
\end{equation}
% where $\gamma_h(z)$ is defined by
% \begin{equation}
% \gamma_h(z)=\dilog(-\EXP^{z})+\frac{\pi^2}{12}-\frac{z^2}{4},
% \end{equation}
where $\lie(z)$ is the dilogarithm function, defined for $\mathbb{C}\setminus[1,\infty)$ by
\begin{equation}
\dilog(z)=-\int^z_0dx\,\frac{\Log(1-x)}{x}.
\end{equation}
Using \eqref{HGFqcl}, the leading asymptotics of Boltzmann weights may be written in the form
\begin{equation}\begin{split}
    \log S(\spn_i)=&-\ii\hbar^{-1} C(\bx_i)+O(1), \\
        \log W_{p-q}(\spn_i,\spn_j)=&-\ii\hbar^{-1}\lag_{u-v}(\bx_i,\bx_j)+O(1), \\
    \log \oW_{p-q}(\spn_i,\spn_j)=&-\ii\hbar^{-1}\ol_{u-v}(\bx_i,\bx_j),+O(1),
\end{split}\end{equation}
where
\begin{equation}\label{lagdefs}
\begin{split}
 C(\bx_i)=&-\ii\pi\sum_{1\leq a<b\leq n}\bigl((\cbx_i)_a-(\cbx_i)_b\bigr), \\
     \lag_{u-v}(\bx_i,\bx_j)=&\frac{(\pi^2-3(u-v)^2)n^2}{12}+\frac{n}{4}\sum_{a=1}^n\bigl(((\cbx_i)_a)^2+((\cbx_j)_a)^2\bigr)
    +\sum_{a,b=1}^n\lie\bigl(-\EXP^{(\cbx_i)_a-(\cbx_j)_b+\ii(u-v)}\bigr), \\
    \ol_{u-v}(\bx_i,\bx_j)=&\lag_{\pi-u+v}(\bx_i,\bx_j).
\end{split}
 \end{equation}
% and
% \begin{equation}
% \begin{split}
%     \lag_{u-v}(\bx_i,\bx_j)=&\frac{(\pi^2-3(u-v)^2)n^2}{12}+\frac{n}{4}\sum_{a=1}^n\bigl(\cbx_{i,a}+\cbx_{j,a}\bigr)
%     +\sum_{a,b=1}^n\lie\bigl(-\EXP^{\cbx_{i,a}-\cbx_{j,b}+\ii(u-v)}\bigr), \\
%     \ol_{u-v}(\bx_i,\bx_j)=&\lag_{\pi-u+v}(\bx_i,\bx_j).
% \end{split}
% \end{equation}

The Lagrangian function satisfies an anti-symmetry relation %\APK{only for imaginary part $<\pi$}
\begin{equation}
\label{clasinversion}
\lag_\alpha(\bx_i,\bx_j) =-\lag_{-\alpha}(\bx_j,\bx_i),
\end{equation}
which is the classical analogue of \eqref{BWinv}.

Due to the appearance of exponentials it will be convenient to implement a change of variables
\begin{equation}\label{hypcovdef}
    (\cby_i)_a=\EXP^{(\cbx_i)_a},\quad \alpha_j=\EXP^{\ii u_j},\quad \beta_j=\EXP^{\ii v_j},
\end{equation}
such that the multicomponent variables $\by_i$ are given by
\begin{equation}\label{yvardef}
    \by_i=\bigl((\cby_i)_1,\ldots,(\cby_i)_{n-1},Y_i^{-1}\bigr),
\end{equation}
and
\begin{equation}\label{prodtozero}
    Y_i=\prod_{a=1}^{n-1}(\cby_i)_a.
\end{equation}
% related by
% \begin{equation}
%     \cby_{i,a}=\EXP^{\cbx_{i,a}},\qquad \alpha_j=\EXP^{\ii u_j},\quad \beta_j=\EXP^{\ii v_j}.
% \end{equation}
%Then the exponentials of the partial derivatives of the Lagrangian functions with respect to the components $(\cspn_i)_a$ may be written in terms of the multiplicative variables $\by_i$ as
The point is that exponentials of the partial derivatives of the above Lagrangian functions with respect to the components $(\cspn_i)_a$ may be expressed as rational functions of components of the variables $\by_i$.  These rational functions are expressed in terms of
\begin{equation}\label{phidefhyp}
    \phi_a(\by_i,\by_j;\alpha,\beta)=\Biggl(\frac{(y_i)_a}{Y_i}\Biggr)^{\frac{n}{2}}
        \prod_{b=1}^n\frac{\alpha-\beta Y_i(\cby_j)_b}{\alpha(\cby_i)_a-\beta(\cby_j)_b},
\end{equation}
as follows
\begin{prop}\label{prop:phicov}
The exponentials of partial derivatives of $\lag$ and $\ol$ defined in \eqref{lagdefs} may be written in terms of $\phi_a$ as
\begin{equation}
    \begin{array}{lll}
        &\ds\exp\left(\frac{\partial \lag_{u-v}(\bx_i,\bx_j)}{\partial (\cbx_i)_a}\right)=\phi_a(\by_i,\by_j;\alpha,-\beta), \quad
        &\ds\exp\left(\frac{\partial \ol_{u-v}(\bx_i,\bx_j)}{\partial (\cbx_i)_a}\right)=\phi_a(\by_i,\by_j;\beta,\alpha), \\[0.3cm]
        &\ds\exp\left(\frac{\partial \lag_{u-v}(\bx_j,\bx_i)}{\partial (\cbx_i)_a}\right)=\phi_a(\by_i,\by_j;\beta,-\alpha)^{-1}, \quad
        &\ds\exp\left(\frac{\partial \ol_{u-v}(\bx_j,\bx_i)}{\partial (\cbx_i)_a}\right)=\phi_a(\by_i,\by_j;\alpha,\beta)^{-1}.
    \end{array}
\end{equation}
% where
% \begin{equation}\label{phidefhyp}
%     \phi_a(\by_i,\by_j;\alpha,\beta)=\Biggl(\frac{(y_i)_a}{Y_i}\Biggr)^{\frac{n}{2}}
%         \prod_{b=1}^n\frac{\alpha-\beta Y_i(\cby_j)_b}{\alpha(\cby_i)_a-\beta(\cby_j)_b}.
% \end{equation}
\end{prop}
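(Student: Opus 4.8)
The plan is to compute the single derivative $\partial\lag_{u-v}(\bx_i,\bx_j)/\partial(\cbx_i)_a$ directly from the explicit form \eqref{lagdefs}, exponentiate, pass to the variables \eqref{hypcovdef}, and recognise the result as $\phi_a$ from \eqref{phidefhyp}; the other three identities will then follow by parameter substitution and the anti-symmetry \eqref{clasinversion}, without any further computation.

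First I would establish the ``base'' case. The essential point, and the one to get right, is that the independent variables are $(\cbx_i)_1,\dots,(\cbx_i)_{n-1}$, while the last component is constrained by $(\cbx_i)_n=-\sum_{c=1}^{n-1}(\cbx_i)_c$; hence for $a\in\{1,\dots,n-1\}$ the operator $\partial/\partial(\cbx_i)_a$ acts as $\partial_a-\partial_n$ on the would-be independent components. Applying this to \eqref{lagdefs}, the constant term drops out, the quadratic term $\tfrac{n}{4}\sum_c((\cbx_i)_c)^2$ contributes $\tfrac{n}{2}\bigl((\cbx_i)_a-(\cbx_i)_n\bigr)$, and for the dilogarithm term I would use $\lie'(w)=-\Log(1-w)/w$ together with $\partial w_{cb}/\partial(\cbx_i)_c=w_{cb}$, where $w_{cb}=-\EXP^{(\cbx_i)_c-(\cbx_j)_b+\ii(u-v)}$, so that $\partial_c\lie(w_{cb})=-\Log(1-w_{cb})$ with the factor $w_{cb}$ cancelling. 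Collecting the $c=a$ and $c=n$ contributions gives $\sum_b\Log\tfrac{1-w_{nb}}{1-w_{ab}}$ for the dilogarithm part.

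Exponentiating and substituting \eqref{hypcovdef} is then routine: using $(\cby_i)_n=Y_i^{-1}$, the quadratic part yields the prefactor $\bigl((\cby_i)_aY_i\bigr)^{n/2}$, while $\prod_b(1-w_{nb})/(1-w_{ab})$ clears to $Y_i^{-n}\prod_b(\alpha+\beta Y_i(\cby_j)_b)/(\alpha(\cby_i)_a+\beta(\cby_j)_b)$. The combined prefactor collapses to $\bigl((\cby_i)_a/Y_i\bigr)^{n/2}$, and comparison with \eqref{phidefhyp} identifies the result as $\phi_a(\by_i,\by_j;\alpha,-\beta)$, which is the first identity. The only genuinely delicate step in the whole argument is exactly this constraint bookkeeping: forgetting the indirect $(\cbx_i)_n$-dependence would corrupt both the power of $Y_i$ in the prefactor and the $b=n$ factor of the product, and the $\phi_a$ form would fail to emerge.

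For the remaining three identities I would avoid recomputing. The key observation is that $\phi_a(\by_i,\by_j;\alpha,\beta)$ is invariant under the common rescaling $(\alpha,\beta)\mapsto(\lambda\alpha,\lambda\beta)$, hence depends only on the ratio $\alpha/\beta$. Consequently the base computation in fact shows that $\exp\bigl(\partial\lag_s(\bx_i,\bx_j)/\partial(\cbx_i)_a\bigr)$ equals $\phi_a(\by_i,\by_j;A,B)$ for any $A,B$ with $A/B=-\EXP^{\ii s}$ (indeed $\alpha/(-\beta)=-\EXP^{\ii(u-v)}$ in the base case). Since $\ol_{u-v}=\lag_{\pi-u+v}$ gives $\EXP^{\ii s}=-\beta/\alpha$, one reads off $A/B=\beta/\alpha$ and the value $\phi_a(\by_i,\by_j;\beta,\alpha)$. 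For the swapped arguments I would first apply \eqref{clasinversion} to write $\lag_{u-v}(\bx_j,\bx_i)=-\lag_{v-u}(\bx_i,\bx_j)$ and $\ol_{u-v}(\bx_j,\bx_i)=-\lag_{-\pi+u-v}(\bx_i,\bx_j)$; the overall minus sign produces an inverse upon exponentiation, and the base rule applied with $s=v-u$ and with $s=-\pi+u-v$ then yields $\phi_a(\by_i,\by_j;\beta,-\alpha)^{-1}$ and $\phi_a(\by_i,\by_j;\alpha,\beta)^{-1}$ respectively, completing the four cases.
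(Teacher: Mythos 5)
Your proposal is correct and follows essentially the same route as the paper, whose proof is simply the direct computation using $\exp\bigl(\tfrac{\partial}{\partial x}\dilog(\EXP^x)\bigr)=(1-\EXP^x)^{\mp1}$ together with careful bookkeeping of the constraint $(\cbx_i)_n=-\sum_{c<n}(\cbx_i)_c$; you have merely written out the details and streamlined the remaining three cases via the degree-zero homogeneity of $\phi_a$ in $(\alpha,\beta)$ and the anti-symmetry \eqref{clasinversion}.
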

\begin{proof}
This may be shown by direct computation mainly using the fact that the dilogarithm satisfies
% \begin{equation}
% \frac{\partial}{\partial x}\dilog(\EXP^{x})=-\Log(1-\EXP^{x}),
% \end{equation}
% so that
\begin{equation}
\exp\left(\frac{\partial}{\partial x}\dilog(\EXP^{x})\right)=1-\EXP^{x}.
\end{equation}
\end{proof}

\subsection{ Systems of multicomponent 5-point difference equations}
The above quasi-classical expansions will be applied to the products of Boltzmann weights that appear in the star-star relation of Theorem \ref{thm:ssr} to derive the desired 5-point difference equations.

In the following, it is convenient to shift two parameters as
\begin{equation}\label{varshft}
v_1\to v_1+\pi,\qquad v_2\to v_2+\pi,
\end{equation}
which will make the expressions obtained below uniform in the parameters.

From the quasi-classical expansions of the Boltzmann weights obtained above, the quasi-classical expansion of the star-star relation \eqref{hypssr} has the form
\begin{equation}\label{ssrqcl}
\begin{split}
\int_{\mathbb{R}^{n-1}}\frac{d\bx_f}{(2\pi\hbar)^{\frac{n-1}{2}}}\,\EXP^{-\frac{\ii}{\hbar}\left(\lag_{u_1-u_2}(\bx_i,\bx_k)+\lag_{v_1-v_2}(\bx_i,\bx_j)+\lag^{(B)}_{\bu\bv}(\bx_f;\bx_i,\bx_j,\bx_k,\bx_l)\right)+O(1)}\phantom{,} \\
=
\int_{\mathbb{R}^{n-1}}\frac{d\bx_g}{(2\pi\hbar)^{\frac{n-1}{2}}}\,\EXP^{-\frac{\ii}{\hbar}\left(\lag_{u_1-u_2}(\bx_j,\bx_l)+\lag_{v_1-v_2}(\bx_k,\bx_l)+\lag^{(W)}_{\bu\bv}(\bx_g;\bx_i,\bx_j,\bx_k,\bx_l)\right)+O(1)},
\end{split}
\end{equation}
% In the quasi-classical limit \eqref{qclas}, the leading asymptotics of the first IRF Boltzmann weight in \eqref{IRFBWhyp} are given by
% \begin{equation}
% \log W^{(1)}_{\bp\bq}\!\left(\arraycolsep=0.7pt\begin{array}{ll}\spn_i & \spn_j \\  \spn_k & \spn_l\end{array}\right)=\!
% \frac{1}{\hbar}\lag_{\bu\bv}\!\left(\!\bx_h\Big|\,\arraycolsep=0.7pt\begin{array}{ll}\bx_i & \bx_j \\  \bx_k & \bx_l\end{array}\right) +O(1),
% \end{equation}
where $d\bx_f$ denotes $\prod_{a=1}^{n-1}(\cbx_f)_a$, and
\begin{equation}
\begin{split}
 %\lag_{\bu\bv}\!\left(\!\bx_h\Big|\,\arraycolsep=0.7pt\begin{array}{ll}\bx_i & \bx_j \\  \bx_k & \bx_l\end{array}\right)=
 \lag^{(B)}_{\bu\bv}(\bx_f;\bx_i,\bx_j,\bx_k,\bx_l)=
 C(\bx_f)+\lag_{u_2-v_1}(\bx_f,\bx_i)+\lag_{u_1-v_2}(\bx_f,\bx_l)\phantom{,} \\
 +\ol_{u_2-v_2}(\bx_j,\bx_f)+\ol_{u_1-v_1}(\bx_k,\bx_f), \\
 \lag^{(W)}_{\bu\bv}(\bx_g;\bx_i,\bx_j,\bx_k,\bx_l)=
 C(\bx_g)+\lag_{u_1-v_2}(\bx_i,\bx_g)+\lag_{u_2-v_1}(\bx_l,\bx_g)\phantom{,} \\
 +\ol_{u_1-v_1}(\bx_g,\bx_j)+\ol_{u_2-v_2}(\bx_g,\bx_k).
\end{split}
 \end{equation}
The saddle-point equations on the left and right hand sides of \eqref{ssrqcl} are respectively given by
\begin{equation}\label{sadpt}
\frac{\partial}{\partial (\cbx_f)_a} \lag^{(B)}_{\bu\bv}(\bx_f;\bx_i,\bx_j,\bx_k,\bx_l)=0,\qquad
\frac{\partial}{\partial (\cbx_g)_a} \lag^{(W)}_{\bu\bv}(\bx_g;\bx_i,\bx_j,\bx_k,\bx_l)=0,\qquad  a=1,\ldots,n-1.
\end{equation}
Let $\al=(\alpha_1,\alpha_2)$ and $\bt=(\beta_1,\beta_2)$, and define the following function in terms of \eqref{phidefhyp}
\begin{equation}\label{Adefhyp}
    A_a(\by_f;\by_i,\by_j,\by_k,\by_l;\al,\bt)=
    \frac{\phi_a(\by_f,\by_i;\alpha_2,\beta_1)\phi_a(\by_f,\by_l;\alpha_1,\beta_2)}
    {\phi_a(\by_f,\by_j;\alpha_2,\beta_2)\phi_a(\by_f,\by_k;\alpha_1,\beta_1)}.%,\qquad a=1,\ldots,n-1.
\end{equation}

Define also
\begin{equation}
\hat{\al}=(\alpha_2,\alpha_1),\quad \hat{\bt}=(\beta_2,\beta_1).
\end{equation}
The function $A_a(\by_f;\by_i,\by_j,\by_k,\by_l;\al,\bt)$ satisfies the following obvious symmetries
\begin{equation}\label{Asyms}
\begin{split}
A_a(\by_f;\by_i,\by_j,\by_k,\by_l;\al,\bt)
&=A_a(\by_f;\by_l,\by_k,\by_j,\by_i;\hat{\al},\hat{\bt}) \\
&=A_a(\by_f;\by_j,\by_i,\by_l,\by_k;\al,\hat{\bt})^{-1} \\
&=A_a(\by_f;\by_k,\by_l,\by_i,\by_j;\hat{\al},\bt)^{-1}.
\end{split}
\end{equation}

If $A_a(\by_f;\by_i,\by_j,\by_k,\by_l;\al,\bt)$ is associated to the vertices of Figure \ref{fig-phidef}, the above correspond to symmetries under rotations and reflections of the square with dashed edges.

\begin{figure}[htb]
\centering
\begin{tikzpicture}[scale=1.3]

\draw[dashed] (-1,-1)--(-1,1)--(1,1)--(1,-1)--(-1,-1);
\draw[-,thick] (-1,-1)--(1,1);
\draw[-,thick] (1,-1)--(-1,1);

\filldraw[fill=black!,draw=black!] (-1,-1) circle (1.5pt)
node[below=1.5pt]{\color{black}\small$k$};
\filldraw[fill=black!,draw=black!] (1,-1) circle (1.5pt)
node[below=1.5pt]{\color{black}\small$l$};
\filldraw[fill=black!,draw=black!] (-1,1) circle (1.5pt)
node[above=1.5pt]{\color{black}\small$i$};
\filldraw[fill=black!,draw=black!] (1,1) circle (1.5pt)
node[above=1.5pt]{\color{black}\small$j$};

\filldraw[fill=black!,draw=black!] (0,0) circle (1.5pt)
node[below=2.5pt]{\color{black}\small$h$};

% \draw[->,thick,dotted] (-1.2,0.5) -- (1.2,0.5);
% \draw[white!] (-1.2,0.5) circle (0.01pt)
% node[left=1.5pt]{\color{black}\small$p'$};
% \draw[->] (-1.2,-0.5) -- (1.2,-0.5);
% \draw[white!] (-1.3,-0.5) circle (0.01pt)
% node[left=1.5pt]{\color{black}\small$p$};
% \draw[->] (-0.5,-1.2) -- (-0.5,1.2);
% \draw[white!] (-0.5,-1.27) circle (0.01pt)
% node[below=1.5pt]{\color{black}\small$q$};
% \draw[->,thick,dotted] (0.5,-1.2) -- (0.5,1.2);
% \draw[white!] (0.5,-1.2) circle (0.01pt)
% node[below=1.5pt]{\color{black}\small$q'$};

% \draw[white!] (0,-1.8) circle (0.01pt)
% node[below=0.1pt]{\color{black}\small$\bW^{(B)}_{\bp\bq}$};

\end{tikzpicture}

\caption{A vertex configuration for \eqref{Adefhyp}.}
\label{fig-phidef}
\end{figure}

Using Proposition \ref{prop:phicov} and the variables \eqref{yvardef} , one may write the saddle-point equations \eqref{sadpt} as follows.
\begin{prop}\label{prop:IRFphicov}
The exponentials of the partial derivatives of the saddle-point equations \eqref{sadpt} may be written in terms of $A_a$ as
%The exponentials of the latter expressions, after also taking into account \eqref{varshft}, may be written as
\begin{align}\label{sadpt2}
% \begin{split}
    \exp\left(\frac{\partial}{\partial (\cbx_f)_a} \lag^{(B)}_{\bu\bv}(\bx_f;\bx_i,\bx_j,\bx_k,\bx_l)\right) % \lag_{\bu\bv}\!\left(\!\bx_h\Big|\,\arraycolsep=0.7pt\begin{array}{ll}\bx_i & \bx_j \\  \bx_k & \bx_l\end{array}\right)\right)
    =A_a(\by_f;\by_i,\by_j,\by_k,\by_l;\al,\bt),\phantom{^{-1}} \\
    \exp\left(\frac{\partial}{\partial (\cbx_g)_a} \lag^{(W)}_{\bu\bv}(\bx_g;\bx_i,\bx_j,\bx_k,\bx_l)\right) % \lag_{\bu\bv}\!\left(\!\bx_h\Big|\,\arraycolsep=0.7pt\begin{array}{ll}\bx_i & \bx_j \\  \bx_k & \bx_l\end{array}\right)\right)
    =A_a(\by_g;\by_i,\by_k,\by_j,\by_l;\bt,\al)^{-1},
% \end{split}
\end{align}
for $ a=1,\ldots,n-1$.
\end{prop}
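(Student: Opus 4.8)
The plan is to use that $\lag^{(B)}_{\bu\bv}$ and $\lag^{(W)}_{\bu\bv}$ are each a sum of a single vertex contribution $C$ and four edge contributions built from $\lag$ and $\ol$; since the exponential of a sum is a product, the exponentiated saddle-point derivative factorises into five pieces that can be evaluated one at a time. For the four edge pieces I would apply Proposition~\ref{prop:phicov} directly, choosing for each term the appropriate one of its four identities according to whether the integration variable $\bx_f$ (respectively $\bx_g$) sits in the first or the second slot of the relevant $\lag$ or $\ol$. For the black weight this routes the two genuine-$\lag$ terms $\lag_{u_2-v_1}(\bx_f,\bx_i)$ and $\lag_{u_1-v_2}(\bx_f,\bx_l)$ into the numerator of $A_a$ in \eqref{Adefhyp}, and the two $\ol$ terms $\ol_{u_2-v_2}(\bx_j,\bx_f)$ and $\ol_{u_1-v_1}(\bx_k,\bx_f)$, in which $\bx_f$ is the second argument, into its denominator.

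The only factor not governed by Proposition~\ref{prop:phicov} is the vertex term $C(\bx_f)$, and I would dispatch it first. Using the sum-to-zero constraint \eqref{sumtozero}, so that $(\cbx_f)_n=-\sum_{c<n}(\cbx_f)_c$ and hence $\partial(\cbx_f)_n/\partial(\cbx_f)_a=-1$, a short computation gives that $\frac{\partial C(\bx_f)}{\partial(\cbx_f)_a}$ is an integer multiple of $2\pi\ii$ (explicitly $-2\pi\ii(n-a)$). Therefore $\exp\!\bigl(\partial C(\bx_f)/\partial(\cbx_f)_a\bigr)=1$, and the vertex term contributes trivially to both saddle-point equations in \eqref{sadpt}.

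The delicate step, and the one I expect to be the main obstacle, is the sign bookkeeping. A verbatim application of Proposition~\ref{prop:phicov} produces $\phi_a$-factors whose $\beta$-arguments differ in sign from those appearing in \eqref{Adefhyp}; this is exactly where the parameter shift \eqref{varshft} enters. Since $\beta_j=\EXP^{\ii v_j}$, the replacement $v_j\to v_j+\pi$ acts as $\beta_j\mapsto-\beta_j$ in every factor, and together with the homogeneity $\phi_a(\by_i,\by_j;-\alpha,-\beta)=\phi_a(\by_i,\by_j;\alpha,\beta)$, which is immediate from \eqref{phidefhyp}, it is what converts the $\beta$-signs so that the product of the four edge factors collapses to $A_a(\by_f;\by_i,\by_j,\by_k,\by_l;\al,\bt)$. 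I would carry this out factor by factor, tracking each $\pm\alpha,\pm\beta$ entry, since this is where sign errors are easiest to make.

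Finally, the white weight is handled identically, the only change being that $\bx_g$ now occupies the opposite argument slots: the $\lag$ terms $\lag_{u_1-v_2}(\bx_i,\bx_g)$ and $\lag_{u_2-v_1}(\bx_l,\bx_g)$ carry $\bx_g$ as their second argument and so contribute inverse $\phi_a$-factors through the second-argument identities of Proposition~\ref{prop:phicov}, while the $\ol$ terms $\ol_{u_1-v_1}(\bx_g,\bx_j)$ and $\ol_{u_2-v_2}(\bx_g,\bx_k)$ carry it as their first argument. After the same shift-and-homogeneity reconciliation, the product reorganises into $A_a$ with the arguments $\by_j,\by_k$ interchanged, the pair $\al,\bt$ swapped, and an overall inversion, i.e.\ $A_a(\by_g;\by_i,\by_k,\by_j,\by_l;\bt,\al)^{-1}$, which is the claimed identity.
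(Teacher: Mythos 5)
Your proposal follows exactly the route the paper intends: the paper offers no written proof of Proposition~\ref{prop:IRFphicov} beyond the pointer ``using Proposition~\ref{prop:phicov} and the variables \eqref{yvardef}'', and your plan --- split $\lag^{(B)}_{\bu\bv}$ and $\lag^{(W)}_{\bu\bv}$ into the vertex term plus four edge terms, apply the appropriate identity of Proposition~\ref{prop:phicov} to each edge term according to which argument slot the integration variable occupies, and multiply --- is precisely that computation. Your treatment of the vertex term is correct and is actually more explicit than anything in the paper: writing $C(\bx_f)=-\ii\pi\sum_{c=1}^{n}(n+1-2c)(\cbx_f)_c$ and using $\partial(\cbx_f)_n/\partial(\cbx_f)_a=-1$ does give $\partial C/\partial(\cbx_f)_a=-2\pi\ii(n-a)$, whose exponential is $1$.

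The one soft spot is the sentence asserting that the global replacement $\beta_j\mapsto-\beta_j$ from \eqref{varshft}, combined with the homogeneity of $\phi_a$, ``converts the $\beta$-signs'' across the board. A verbatim application of Proposition~\ref{prop:phicov} leaves the two $\ol$-derived denominator factors already carrying unsigned $\beta$'s (the fourth identity gives $\phi_a(\by_f,\by_j;\alpha_2,\beta_2)^{-1}$ directly), while only the two $\lag$-derived numerator factors carry $-\beta_j$; a \emph{uniform} substitution $\beta_j\mapsto-\beta_j$ therefore repairs the numerator but disturbs the denominator, and degree-zero homogeneity only trades a sign on $\beta$ for a sign on $\alpha$. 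The shift enters $\lag_{u-v}$ and $\ol_{u-v}=\lag_{\pi-u+v}$ asymmetrically, and that asymmetry is what the reconciliation must exploit. You do promise to carry the check out factor by factor, which is the right instinct --- just be aware that this step is not a one-line cosmetic relabelling but the actual content of the bookkeeping, and it is the only place your argument could silently fail.
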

The expressions appearing in Proposition \ref{prop:IRFphicov} are rational functions that are each multilinear in the components of the four variables $\by_i,\by_j,\by_k,\by_l$, respectively (note that this does not imply a unique or rational solution for $n>2$), and also are invariant under permutations of the components of any of these four variables.

Thus, through Proposition \ref{prop:IRFphicov} the $n-1$ saddle-point equations \eqref{sadpt} become the following equations
\begin{align}\label{fcqhypa}
A_a(\by_f;\by_i,\by_j,\by_k,\by_l;\al,\bt)=1,\qquad a=1,\ldots,n-1, \\ \label{fcqhypa2}
A_a(\by_f;\by_i,\by_k,\by_j,\by_l;\bt,\al)=1,\qquad a=1,\ldots,n-1,
\end{align}
% together with the condition on variables
% \begin{equation}\label{fcqhypb}
% %y_{i,n}=\prod_{a=1}^{n-1}\frac{1}{y_{i,a}},
% \prod_{a=1}^{n}(y_i)_a=1,
% \end{equation}
which may be interpreted as $n-1$-component systems of 5-point equations for the variables $\by_f,\by_i,\by_j,\by_k,\by_l.$  %, where $\by_h$ is the variable centered on the face of a square, and $\by_i,\by_j,\by_k,\by_l$ are variables centered at corners (the condition \eqref{fcqhypb} is effectively imposed on the variables $\by_h$ through the definition of $\phi_a$ given in \eqref{phidefhyp}).

\subsubsection{The case $n=2$}

%The $n=2$ case of \eqref{fcqhypa} and \eqref{fcqhypb} is equivalent to an equations labelled $A3_{(1;1)}$ that was previously derived in the context of consistency of 5-point equations on a face-centered cubic \cite{Kels:2020zjn}.

For $n=2$, the variables \eqref{yvardef} take the form $\by_i=\bigl(y_i,(y_i)^{-1}\bigr)$.  Then the equation \eqref{fcqhypa} is equivalent to
\begin{equation}\label{a311}
A(y_h;y_i,y_j,y_k,y_l;\al,\bt)=
\frac{a(y_h,y_i;\alpha_2,\beta_1)a(y_h,y_l;\alpha_1,\beta_2)}
{a(y_h,y_j;\alpha_2,\beta_2)a(y_h,y_k;\alpha_1,\beta_1)}=1,
\end{equation}
where
\begin{equation}
a(y_i,y_j;\alpha,\beta)=
\frac{\alpha-\beta y_iy_j}{\alpha y_i -\beta y_j}
\frac{\alpha-\beta y_i/y_j}{\alpha y_i -\beta/y_j}.
\end{equation}
In terms of the variables
\begin{equation}
\overline{y}_i=\sqrt{y_i^2-1},
\end{equation}
this becomes
\begin{equation}
a(\overline{y}_i,\overline{y}_j;\alpha,\beta)=
\frac{\alpha^2+\beta^2\overline{y}_i^2-2\alpha\beta\overline{y}_iy_j}
{\beta^2+\alpha^2\overline{y}_i^2-2\alpha\beta\overline{y}_iy_j}.
\end{equation}
The equation \eqref{fcqhypa2} gives a similar expression. Then \eqref{a311} corresponds to the multiplicative four-leg form of an equation labelled $A3_{(1)}$ that was previously derived in the context of consistency of 5-point equations on a face-centered cubic \cite{Kels:2020zjn}.  Thus for $n>2$, the system of equations \eqref{fcqhypa} (or \eqref{fcqhypa2}) provide a multicomponent extension of this equation.

\subsubsection{The case $n=3$}

For $n=3$, the variables \eqref{yvardef} take the form $\by_i=\bigl((\cby_i)_1,(\cby_i)_2,((\cby_i)_1(\cby_i)_2)^{-1}\bigr)$.  In the following it will be useful to use the following hatted variables which exchange the $1$ and $2$ components:
\begin{equation}
%\hat{\al}=(\alpha_2,\alpha_1),\quad \hat{\bt}=(\beta_2,\beta_1),\quad 
\hat{\by}_h=\bigl((y_h)_2,(y_h)_1,((y_h)_1(y_h)_2)^{-1}).
\end{equation}

Define the following three multivariate polynomials
\begin{equation}
\begin{split}
P_0(\by_i,\by_j,\by_k;\by_h;\al,\bt)=&
G_1(\by_i,\by_j,\by_k;\by_h;\al,\bt)\bigl(\tfrac{\alpha_1}{\beta_2}\bigr)^2/(y_h)_3 -
G_2(\by_i,\by_j,\by_k;\by_h;\al,\bt)\bigl(\tfrac{\alpha_1}{\beta_2}(y_h)_2\bigr)^2, \\
P_1(\by_i,\by_j,\by_k;\by_h;\al,\bt)=&
G_1(\by_i,\by_j,\by_k;\by_h;\al,\bt)\Bigl(\bigl((y_h)_3\bigr)^{-3}-\bigl(\tfrac{\alpha_1}{\beta_2}\bigr)^3\Bigr) \\ -&  
G_2(\by_i,\by_j,\by_k;\by_h;\al,\bt)\Bigl(1-\bigl(\tfrac{\alpha_1}{\beta_2}(y_h)_2\bigr)^3\Bigr), \\
P_2(\by_i,\by_j,\by_k;\by_h;\al,\bt)=&
G_2(\by_i,\by_j,\by_k;\by_h;\al,\bt)\tfrac{\alpha_1}{\beta_2}(y_h)_2 -
G_1(\by_i,\by_j,\by_k;\by_h;\al,\bt)\tfrac{\alpha_1}{\beta_2}\bigl((y_h)_3\bigr)^{-2},
\end{split}
\end{equation}
where
\begin{equation}
\begin{split}
G_1(\by_i,\by_j,\by_k;\by_h;\al,\bt)=\prod_{a=1}^3((y_i)_a/(y_h)_3-\tfrac{\alpha_2}{\beta_1})((y_j)_a-(y_h)_2\tfrac{\alpha_2}{\beta_2})((y_k)_a-(y_h)_2\tfrac{\alpha_1}{\beta_1}), \\
G_2(\by_i,\by_j,\by_k;\by_h;\al,\bt)=\prod_{a=1}^3((y_i)_a-(y_h)_2\tfrac{\alpha_2}{\beta_1})((y_j)_a/(y_h)_3-\tfrac{\alpha_2}{\beta_2})((y_k)_a/(y_h)_3-\tfrac{\alpha_1}{\beta_1}).
\end{split}
\end{equation}

The equations \eqref{fcqhypa} for $n=3$ will be expressed in terms of the following polynomial in variables $r$ and $s$
\begin{equation}\label{multpolydef}
\begin{split}
P(\by_i,\by_j,\by_k;\by_h;\al,\bt;r,s) 
&= P_0(\by_i,\by_j,\by_k;\by_h;\al,\bt)(r^2+s) \\
&+P_1(\by_i,\by_j,\by_k;\by_h;\al,\bt)r \\
&+P_2(\by_i,\by_j,\by_k;\by_h;\al,\bt)(rs+1).
\end{split}
\end{equation}

%The pair of equations \eqref{fcqhypa} for $n=3$ may be expressed in four different ways, which are useful depending on which of the four variables $\by_i,\by_j,\by_k,\by_l$ one might want to solve for.  %These different forms of the equation are related through the symmetry \eqref{Asyms}, so in the following it will be enough to consider one form of the equation to solve for $\by_l$.
\begin{lemma}\label{lem:hyp}
For $n=3$, the pair of equations \eqref{fcqhypa} may be written in the equivalent form
\begin{equation}\label{n3eqs}
\begin{split}
% \begin{gathered}
% \bigl\{
% P(\by_l,\by_k,\by_j;\by_h;\hat{\al},\hat{\bt};r_i,s_i)=0, 
% P(\by_l,\by_k,\by_j;\hat{\by}_h;\hat{\al},\hat{\bt};r_i,s_i)=0
% \bigr\},
% \\
% \bigl\{
% P(\by_k,\by_l,\by_i;\by_h;\hat{\al},\bt;r_j,s_j)=0, 
% P(\by_k,\by_l,\by_i;\hat{\by}_h;\hat{\al},\bt;r_j,s_j)=0
% \bigr\},
% \\
% \bigl\{
% P(\by_j,\by_i,\by_l;\by_h;\al,\hat{\bt};r_k,s_k)=0, 
% P(\by_j,\by_i,\by_l;\hat{\by}_h;\al,\hat{\bt};r_k,s_k)=0
% \bigr\},
% \\
% \bigl\{
P(\by_i,\by_j,\by_k;\by_h;\al,\bt;r_l,s_l)=0, \\
P(\by_i,\by_j,\by_k;\hat{\by}_h;\al,\bt;r_l,s_l)=0,
% \bigr\},
% \end{gathered}
\end{split}
\end{equation}
where $r_l=(y_l)_1(y_l)_2$ and $s_l=(y_l)_1+(y_l)_2$.%, and similarly for $r_j,r_k,r_l$ and $s_j,s_k,s_l$.%, and $\hat{\by}_h=(z_h,y_h)$.
\end{lemma}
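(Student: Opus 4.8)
The plan is to start from the pair of saddle-point equations \eqref{fcqhypa}, namely $A_a(\by_f;\by_i,\by_j,\by_k,\by_l;\al,\bt)=1$ for $a=1,2$ (with $\by_f=\by_h$), and to show each is equivalent, after clearing denominators, to one of the two polynomial equations in \eqref{n3eqs}. First I would observe that in the ratio \eqref{Adefhyp} the prefactor $((y_f)_a/Y_f)^{n/2}$ of \eqref{phidefhyp} is independent of $\alpha,\beta$ and of the second spin argument, so it occurs twice in the numerator and twice in the denominator of $A_a$ and cancels identically. What remains of each $\phi_a$ is a ratio of the two products $\prod_{b=1}^3(\alpha-\beta Y_f(y_m)_b)$ and $\prod_{b=1}^3(\alpha(y_f)_a-\beta(y_m)_b)$, where $Y_f=((y_h)_3)^{-1}$ for $n=3$.

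Next, cross-multiplying $A_a=1$, I would separate the four boundary spins into the three spins $\by_i,\by_j,\by_k$ and the distinguished spin $\by_l$. Collecting the $\by_i,\by_j,\by_k$ contributions and using $Y_f=((y_h)_3)^{-1}$ together with $(y_f)_a=(y_h)_a$, I would identify the product of the $\by_i$-numerator with the $\by_j,\by_k$-denominators (at the chosen index $a$) with $G_1$ of the statement, and the complementary product with $G_2$; the constant factors $(-\beta_1)^3,(-\beta_2)^3$ produced by each product cancel between the two sides. At this stage $A_a=1$ has become $G_1\,\mathsf{N}(\by_l)=G_2\,\mathsf{D}_a(\by_l)$, where $\mathsf{N}(\by_l)=\prod_b(\alpha_1-\beta_2 Y_f(y_l)_b)$ and $\mathsf{D}_a(\by_l)=\prod_b(\alpha_1(y_h)_a-\beta_2(y_l)_b)$.

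The key computational step is then to expand $\mathsf{N}(\by_l)$ and $\mathsf{D}_a(\by_l)$ in the elementary symmetric functions $e_1,e_2,e_3$ of $(y_l)_1,(y_l)_2,(y_l)_3$, and to impose the product-to-one constraint \eqref{prodtozero} in the form $e_3=(y_l)_1(y_l)_2(y_l)_3=1$, so that $e_1=s_l+r_l^{-1}$ and $e_2=r_l+s_l r_l^{-1}$ with $r_l=(y_l)_1(y_l)_2$ and $s_l=(y_l)_1+(y_l)_2$. Multiplying the whole equation by $r_l$ then replaces $e_1 r_l,\,r_l,\,e_2 r_l$ by the three building blocks $s_l r_l+1,\;r_l,\;r_l^2+s_l$, turning $G_1\mathsf{N}=G_2\mathsf{D}_a$ into an affine combination of $(r_l^2+s_l)$, $r_l$ and $(s_l r_l+1)$. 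Reading off the three coefficients and comparing with \eqref{multpolydef} should verify that they coincide, up to an overall nonzero factor, with $P_0,P_1,P_2$; this bookkeeping, in particular tracking the powers of $(y_h)_2,(y_h)_3$ that distribute between the $G_1$ and $G_2$ terms, is the main obstacle and the only genuinely computational part of the argument.

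Finally, I would treat the two values $a=1,2$ simultaneously by noting that the entire construction depends on the index $a$ only through $(y_f)_a=(y_h)_a$ in the denominator products, whereas the numerator products depend on $\by_h$ solely through $Y_f=((y_h)_3)^{-1}$, which is invariant under exchanging $(y_h)_1$ and $(y_h)_2$. Hence passing from $a=2$ to $a=1$ amounts precisely to the substitution $\by_h\to\hat{\by}_h$ in $P_0,P_1,P_2$, and the second equation of \eqref{n3eqs} follows from the first by this relabelling. Since the cleared denominators are generically nonzero, all manipulations are reversible, so the pair \eqref{n3eqs} is equivalent to the pair \eqref{fcqhypa} for $n=3$.
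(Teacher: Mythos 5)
Your proposal is correct and follows essentially the same route as the paper, whose own proof is just the one-line remark that after clearing denominators the components $(y_l)_1,(y_l)_2$ enter only through the symmetric combinations $r_l$ and $s_l$; your write-up simply makes explicit the prefactor cancellation in \eqref{Adefhyp}, the grouping of the $\by_i,\by_j,\by_k$ factors into $G_1,G_2$, the elementary-symmetric-function substitution with $e_3=1$, and the identification of the component index $a$ with the hatting of $\by_h$. The final coefficient bookkeeping you defer is exactly the part the paper also leaves implicit.
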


\begin{proof}
%Consider the pair of equations that involve $r_l$ and $s_l$.  
This may be obtained from \eqref{fcqhypa} by writing the equations as polynomials and noting that $y_l$ and $z_l$ only appear in the symmetric combinations $y_l+z_l$ or $y_lz_l$.  %The remaining pairs of equations follow from the symmetry of the equations \eqref{fcqhypa}.
\end{proof}

\begin{remark}
One may use the symmetries \eqref{Asyms} to obtain equivalent equations written in terms of the pairs $(r_i,s_i)$, $(r_j,s_j)$, or $(r_k,s_k)$. 
\end{remark}

Define $F(\by_i,\by_j,\by_k;\by_h;\al,\bt;x)$ to be the following cubic in the variable $x$
\begin{equation}\label{cubichypdef}
\begin{split}
F(\by_i,\by_j,\by_k;\by_h;\al,\bt;x) &=
F_0(\by_i,\by_j,\by_k;\by_h;\al,\bt) \\
&+F_1(\by_i,\by_j,\by_k;\by_h;\al,\bt)x \\
&+F_2(\by_i,\by_j,\by_k;\by_h;\al,\bt)x^2 \\
&-F_0(\by_i,\by_j,\by_k;\by_h;\al,\bt)x^3,
\end{split}
\end{equation}
where
\begin{equation}
\begin{split}
F_0(\by_i,\by_j,\by_k;\by_h;\al,\bt)&=P_2(\by_i,\by_j,\by_k;\by_h;\al,\bt)P_0(\by_i,\by_j,\by_k;\hat{\by}_h;\al,\bt) \\ &- P_0(\by_i,\by_j,\by_k;\by_h;\al,\bt)P_2(\by_i,\by_j,\by_k;\hat{\by}_h;\al,\bt), \\
F_1(\by_i,\by_j,\by_k;\by_h;\al,\bt)&=P_0(\by_i,\by_j,\by_k;\by_h;\al,\bt)P_1(\by_i,\by_j,\by_k;\hat{\by}_h;\al,\bt) \\ &- P_1(\by_i,\by_j,\by_k;\by_h;\al,\bt)P_0(\by_i,\by_j,\by_k;\hat{\by}_h;\al,\bt), \\
F_2(\by_i,\by_j,\by_k;\by_h;\al,\bt)&=P_2(\by_i,\by_j,\by_k;\by_h;\al,\bt)P_1(\by_i,\by_j,\by_k;\hat{\by}_h;\al,\bt) \\ &- P_1(\by_i,\by_j,\by_k;\by_h;\al,\bt)P_2(\by_i,\by_j,\by_k;\hat{\by}_h;\al,\bt).
\end{split}
\end{equation}

\begin{prop}\label{prop:hyproot}
Let $t_a$, $a=1,2,3$ denote the three roots of the cubic equation \eqref{cubichypdef}.  The solutions $\bigl((y_l)_1,(y_l)_2\bigr)$ of the 5-point equation \eqref{n3eqs} are given by the six pairs of distinct roots
\begin{equation}
\bigl((y_l)_1,(y_l)_2\bigr)=(t_a,t_b),\qquad a,b=1,2,3,\quad a\neq b.
\end{equation}
\end{prop}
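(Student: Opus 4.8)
The plan is to solve the pair \eqref{n3eqs} directly as a linear system. By Lemma \ref{lem:hyp} the two equations are $P(\by_i,\by_j,\by_k;\by_h;\al,\bt;r_l,s_l)=0$ and $P(\by_i,\by_j,\by_k;\hat{\by}_h;\al,\bt;r_l,s_l)=0$, and by \eqref{multpolydef} each is linear in the three quantities $r_l^2+s_l$, $r_l$ and $r_ls_l+1$, with coefficients $P_0,P_1,P_2$ evaluated at $\by_h$ and at $\hat{\by}_h$ respectively. Thus the pair \eqref{n3eqs} asserts that the vector $(r_l^2+s_l,\,r_l,\,r_ls_l+1)$ is annihilated by both rows of the $2\times3$ coefficient matrix, i.e. it lies in the one-dimensional kernel spanned by the vector of $2\times2$ minors. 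A short computation identifies these minors with the $F_0,F_1,F_2$ of \eqref{cubichypdef}, so that \eqref{n3eqs} is equivalent to the single proportionality $(r_l^2+s_l):r_l:(r_ls_l+1)=(-F_2):F_0:F_1$, i.e. to the existence of a scalar $\lambda$ with $r_l=\lambda F_0$, $r_l^2+s_l=-\lambda F_2$ and $r_ls_l+1=\lambda F_1$.

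First I would establish the backward direction. Let $t_a,t_b,t_c$ be the three roots of $F$, and set $r_l=t_at_b$, $s_l=t_a+t_b$ for a choice of two distinct roots. The task is to produce a $\lambda$ realising the three scalar equations above, equivalently to verify the two independent ratio conditions. Writing $s_l=e_1-t_c$ and $r_l=e_3/t_c$ in terms of the elementary symmetric functions $e_1,e_2,e_3$ of the roots (read off from the coefficients of \eqref{cubichypdef}), each ratio condition becomes a rational identity in the single variable $t_c$, and I expect both to reduce to a multiple of $F(t_c)$; hence they hold precisely because $t_c$ is a root of $F$. In particular $e_3=t_at_bt_c=1$ here, so $r_l=1/t_c$ and $\lambda=t_at_b/F_0$ is well defined; the relation $t_at_bt_c=1$ is the linchpin that makes the pairing self-consistent.

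For the forward direction and exhaustiveness I would eliminate $s_l$ and $\lambda$ from the three scalar equations. Substituting $s_l=-\lambda F_2-r_l^2$ and $r_l=\lambda F_0$ into $r_ls_l+1=\lambda F_1$ yields a cubic in $\lambda$ (equivalently, eliminating $\lambda$ gives a cubic in $r_l$), so the system \eqref{n3eqs} has at most three solutions $(r_l,s_l)$. Each such solution determines the unordered pair $\{(y_l)_1,(y_l)_2\}$ as the two roots of $w^2-s_lw+r_l=0$. Since the three unordered pairs $\{t_a,t_b\}$ built from the three roots of $F$ already furnish three distinct solutions $(r_l,s_l)$ by the previous paragraph, these exhaust all solutions; reading them as ordered pairs gives the six $(t_a,t_b)$ with $a\neq b$. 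The symmetries \eqref{Asyms} are consistent with, and can be used to cross-check, this count.

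The main obstacle is the backward verification: showing that the two ratio conditions collapse onto $F(t_c)=0$. This is a finite polynomial computation, but it requires careful bookkeeping of which symmetric combination of the roots matches which coefficient $F_0,F_1,F_2$, since the quantities $r_l^2+s_l$ and $r_ls_l+1$ mix the symmetric functions in a way that is easy to transpose. Equivalently, one may bypass the case-by-case substitution by proving the clean identity $\prod_{a<b}\bigl(w^2-(t_a+t_b)w+t_at_b\bigr)=F(w)^2$ up to a nonzero constant, since each root $t_a$ occurs in exactly two of the three quadratic factors; establishing this square-of-the-cubic identity is the technical heart of the argument, after which both directions of the Proposition follow immediately.
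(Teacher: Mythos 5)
Your route is viable but genuinely different from the paper's, and one of your proposed shortcuts does no work. The paper's proof eliminates $s_l$ from the pair \eqref{n3eqs} to get a single cubic in $r_l$, observes that $1/r_l=(y_l)_3$ therefore satisfies the reversed cubic, which is identified with \eqref{cubichypdef}, and then invokes the invariance of \eqref{fcqhypa} under permutations of the three components $(y_l)_1,(y_l)_2,1/r_l$ of $\by_l$ to conclude that $(y_l)_1$ and $(y_l)_2$ must \emph{individually} be roots of the same cubic; distinctness then comes from the product of the roots equalling one. That symmetry step is the key idea you are missing: it renders your entire ``backward direction'' unnecessary, because once each component is known to be a root of $F$, the pairing $\bigl((y_l)_1,(y_l)_2\bigr)=(t_a,t_b)$ is forced. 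Your kernel formulation $(r_l^2+s_l):r_l:(r_ls_l+1)=(-F_2):F_0:F_1$ is a correct repackaging of the paper's elimination, and your counting argument for exhaustiveness is sound and in fact more explicit than the paper about why all six pairs actually occur. The deferred verification does close, though not quite as you predict: with $r_l=1/t_c$ and $s_l=e_1-t_c$ one of the two ratio conditions (the one between $r_l$ and $r_ls_l+1$) reduces to a Vieta identity among $F_0,F_1,F_2$ that holds identically, and only the condition involving $r_l^2+s_l$ reduces to $F(t_c)=0$. Be warned that checking this against the displayed formulas requires resolving an apparent transposition of the coefficients of $r$ and $r^2$ between the definitions of $q_1,q_2$ (equivalently $F_1,F_2$) and the eliminated cubic; a direct expansion of the resultant gives the coefficient of $r$ as $P_1(\by_h)P_0(\hat{\by}_h)-P_0(\by_h)P_1(\hat{\by}_h)$, so you must fix the labelling consistently before the Vieta identities line up.

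The concrete gap is your final paragraph. The identity $\prod_{a<b}\bigl(w^2-(t_a+t_b)w+t_at_b\bigr)=c\,F(w)^2$ is an immediate triviality for \emph{any} cubic, since each factor equals $(w-t_a)(w-t_b)$ and each root appears in exactly two factors. It carries no information about the system \eqref{n3eqs} or about how $F_0,F_1,F_2$ are built from the $P$'s, so it cannot serve as a ``bypass'' for the ratio-condition verification: establishing it proves nothing about whether $(t_at_b,\,t_a+t_b)$ satisfies \eqref{n3eqs}. You must either carry out the symmetric-function computation you sketch in your second paragraph, or adopt the paper's permutation-symmetry argument, which avoids the computation entirely.
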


\begin{proof}

%This mostly follows from the symmetry of the equation \eqref{n3eqs}.

% First, consider solving for the pair of components $(y_l,z_l)$.  The equation \eqref{n3eq}, and the same equation with the exchange $y_h\leftrightarrow z_h$, may be written in the following form
% \begin{equation}
% \label{n3eq2}
% \begin{split}
% p_0(\by_i,\by_j,\by_k;\by_h)(y_l^2z_l^2+y_l+z_l)+p_1(\by_i,\by_j,\by_k;\by_h)y_lz_l%\phantom{.} \\
% +p_2(\by_i,\by_j,\by_k;\by_h)(y_lz_l(y_l+z_l)+1)=0, \\
% p_0(\by_i,\by_j,\by_k;\hat{\by}_h)(y_l^2z_l^2+y_l+z_l)+p_1(\by_i,\by_j,\by_k;\hat{\by}_h)y_lz_l%\phantom{.} \\
% +p_2(\by_i,\by_j,\by_k;\hat{\by}_h)(y_lz_l(y_l+z_l)+1)=0,
% \end{split}
% \end{equation}
% where $p_0,p_1,p_2$ are polynomials that are independent of the components $y_l$ and $z_l$, and $\hat{\by}_h=(z_h,y_h)$.  

% Defining
% \begin{equation}
%     r_l=y_lz_l,\qquad s_l=y_l+z_l,
% \end{equation}
% the above equations may be written as
% \begin{equation}
% \label{n3eq3}
% \begin{split}
% p_0(\by_i,\by_j,\by_k;\by_h)(r_l^2+s_l)+p_1(\by_i,\by_j,\by_k;\by_h)r_l%\phantom{.} \\
% +p_2(\by_i,\by_j,\by_k;\by_h)(r_ls_l+1)=0, \\
% p_0(\by_i,\by_j,\by_k;\hat{\by}_h)(r_l^2+s_l)+p_1(\by_i,\by_j,\by_k;\hat{\by}_h)r_l%\phantom{.} \\
% +p_2(\by_i,\by_j,\by_k;\hat{\by}_h)(r_ls_l+1)=0,
% \end{split}
% \end{equation}
After eliminating $s_l$ from the equations \eqref{n3eqs} it is seen that $r_l$ must satisfy the following cubic equation
\begin{equation}
q_0+q_1r_l+q_2r_l^2-q_0r_l^3=0,
\end{equation}
where
\begin{equation}
\begin{split}
q_0&=P_2(\by_i,\by_j,\by_k;\by_h)P_0(\by_i,\by_j,\by_k;\hat{\by}_h) - P_0(\by_i,\by_j,\by_k;\by_h)P_2(\by_i,\by_j,\by_k;\hat{\by}_h), \\
q_1&=P_1(\by_i,\by_j,\by_k;\by_h)P_2(\by_i,\by_j,\by_k;\hat{\by}_h) - P_2(\by_i,\by_j,\by_k;\by_h)P_1(\by_i,\by_j,\by_k;\hat{\by}_h), \\
q_2&=P_1(\by_i,\by_j,\by_k;\by_h)P_0(\by_i,\by_j,\by_k;\hat{\by}_h) - P_0(\by_i,\by_j,\by_k;\by_h)P_1(\by_i,\by_j,\by_k;\hat{\by}_h).
\end{split}
\end{equation}
Equivalently, $1/r_l$ must satisfy
\begin{equation}
q_0+\hat{q}_1r_l^{-1}+\hat{q}_2r_l^{-2}-q_0r_l^{-3}=0,
\end{equation}
where
\begin{equation}
\hat{q}_1=-q_2,\qquad \hat{q}_2=-q_1.
\end{equation}

Since the equations \eqref{fcqhypa} are symmetric under permutations of $(y_l)_1$, $(y_l)_2$, and $1/r_l=((y_l)_1(y_l)_2)^{-1}$, both $(y_l)_1$ and $(y_l)_2$ individually must also be solutions of the cubic equation
\begin{equation}\label{solpoly}
q_0+\hat{q}_1x+\hat{q}_2x^2-q_0x^3=0.
\end{equation}
Furthermore, since the product of roots of this equation are equal to one, $(y_l)_1$ and $(y_l)_2$ clearly must be distinct.  Thus by symmetry the solutions of the equations \eqref{n3eqs} are the six pairs
\begin{equation}\label{hypsols}
((y_l)_1,(y_l)_2)=(t_a,t_b),\qquad a,b=1,2,3,\quad a\neq b,
\end{equation}
where $t_a$, $a=1,2,3$, are the three roots of the cubic equation \eqref{solpoly}.

\end{proof}

\begin{remark}
While the solutions are not rational, the solutions of \eqref{n3eqs} are unique up to permutations of the components of the variables, as expected from the permutation symmetry of the equations.
\end{remark}

\subsubsection{Rational limit}
Introduce a parameter $\epsilon$ and define the variables and parameters 
\begin{equation}
\bbw_h=(\EXP^{(y_h)_1\epsilon},\EXP^{(y_h)_2\epsilon}),\qquad \tht=(\EXP^{\alpha_1\epsilon},\EXP^{\alpha_2\epsilon}),\; \ph=(\EXP^{\beta_1\epsilon},\EXP^{\beta_2\epsilon}).
\end{equation}

A rational limit may be taken by substituting exponentials for the components of the variables and parameters for the system of equations \eqref{fcqhypa} in the form
\begin{equation}\label{fcqhypaepsilon}
A_a\bigl(\bbw_h;\bbw_i,\bbw_j,\bbw_k,\bbw_l;\tht,\ph\bigr)=1,\qquad a=1,\ldots,n-1,
\end{equation}
and taking the limit $\epsilon\to0$.  %In this expression, the exponential of a multicomponent variable $\EXP^{\by_i}$ is shorthand for taking the exponential the components as $(\EXP^{(y_i)_1},\ldots,\EXP^{(y_i)_{n-1}},Y_i)$, where $Y_i=\exp(-\sum_{a=1}^{n-1}(y_i)_{a})$. Similarly, $\EXP^{\al\epsilon}$ represents $(\EXP^{\alpha_1\epsilon},\EXP^{\alpha_2\epsilon})$.

In the following, the variables \eqref{yvardef} will be redefined as
\begin{equation}
\by_i=\bigl((y_i)_1,\ldots,(y_i)_{n-1},-Y_i),
\end{equation}
where
\begin{equation}
Y_i=\sum_{a=1}^{n-1}(y_i)_a.
\end{equation}
Then, in terms of the function
\begin{equation}\label{phidefrat}
    \phi^{(r)}_a(\by_i,\by_j;\alpha,\beta)=
        \prod_{b=1}^n\frac{ Y_i +(\cby_j)_b -\alpha+\beta}{(\cby_i)_a-(\cby_j)_b+\alpha-\beta},
\end{equation}
the limit $\epsilon\to0$ of \eqref{fcqhypaepsilon} is
\begin{equation}\label{fcqrata}
    A^{(r)}_a(\by_h;\by_i,\by_j,\by_k,\by_l;\al,\bt)=1,\qquad a=1,\ldots,n-1,
\end{equation}
where
\begin{equation}
A^{(r)}_a(\by_h;\by_i,\by_j,\by_k,\by_l;\al,\bt)=
    \frac{\phi^{(r)}_a(\by_h,\by_i;\alpha_2,\beta_1)\phi^{(r)}_a(\by_h,\by_l;\alpha_1,\beta_2)}
    {\phi^{(r)}_a(\by_h,\by_j;\alpha_2,\beta_2)\phi^{(r)}_a(\by_h,\by_k;\alpha_1,\beta_1)}.
\end{equation}
 The $n-1$ equations \eqref{fcqrata} % together with the condition on variables
% \begin{equation}\label{fcqratb}
% %y_{i,n}=\prod_{a=1}^{n-1}\frac{1}{y_{i,a}},
% \sum_{a=1}^{n}(y_i)_a=1,
% \end{equation}
define another $n-1$-component system of 5-point equations for the variables $\by_h,\by_i,\by_j,\by_k,\by_l.$  Presumably the above rational system will arise from the quasi-classical expansion of a star-star relation related to identities for rational hypergeometric integrals associated to the $A_n$ root system \cite{Rains2009}.

%\subsubsubsection{$n=2$}
\subsubsection{Rational limit: $n=2$ and $n=3$ cases}
First, for the $n=2$ case the variables are $\by_i=(y_i,-y_i)$ and the function \eqref{phidefrat} is
\begin{equation}
\phi^{(r)}_a(\by_i,\by_j;\alpha,\beta)=\frac{y_i+y_j-\alpha+\beta}{y_i+y_j+\alpha-\beta}\frac{y_i-y_j-\alpha+\beta}{y_i-y_j+\alpha-\beta}.
\end{equation}
In this case the equation \eqref{fcqrata} is equivalent to the multiplicative four-leg form of an equation labelled $A2_{(1;1)}$ from \cite{Kels:2020zjn}.  Thus for $n>2$ the equations \eqref{fcqrata} provide a multicomponent extension of this equation.

For $n=3$, the variables take the form $\by_i=\bigl((y_i)_1,(y_i)_2,-(y_i)_1-(y_i)_2\bigr)$, and the hatted variables will be defined by
\begin{equation}
\hat{\by}_h=\bigl((y_h)_2,(y_h)_1,-(y_h)_1-(y_h)_2\bigr).
\end{equation}
Define the following three multivariate polynomials
\begin{equation}
\begin{split}
P^{(r)}_0(\by_i,\by_j,\by_k;\by_h;\al,\bt)=\,&
G^{(r)}_1(\by_i,\by_j,\by_k;\by_h;\al,\bt)(\beta_2-\alpha_1-(y_h)_3) \\ +\,&
G^{(r)}_2(\by_i,\by_j,\by_k;\by_h;\al,\bt)((y_h)_2+\alpha_1-\beta_2), \\
P^{(r)}_1(\by_i,\by_j,\by_k;\by_h;\al,\bt)=\,&
G^{(r)}_1(\by_i,\by_j,\by_k;\by_h;\al,\bt) -  
G^{(r)}_2(\by_i,\by_j,\by_k;\by_h;\al,\bt), \\
P^{(r)}_2(\by_i,\by_j,\by_k;\by_h;\al,\bt)=\,&
G^{(r)}_1(\by_i,\by_j,\by_k;\by_h;\al,\bt)((y_h)_3+\alpha_1-\beta_2)^3 \\ +\,&
G^{(r)}_2(\by_i,\by_j,\by_k;\by_h;\al,\bt)(\beta_2-\alpha_1-(y_h)_2)^3,
\end{split}
\end{equation}
where
\begin{equation}
\begin{split}
G^{(r)}_1(\by_i,\by_j,\by_k;\by_h;\al,\bt)=\prod_{a=1}^3((y_i)_a-(y_h)_3-\alpha_2+\beta_1)((y_j)_a-(y_h)_2-\alpha_2+\beta_2)\phantom{,} \\[-0.3cm] \times ((y_k)_a-(y_h)_2-\alpha_1+\beta_1), \\
G^{(r)}_2(\by_i,\by_j,\by_k;\by_h;\al,\bt)=\prod_{a=1}^3((y_i)_a-(y_h)_2-\alpha_2+\beta_1)((y_j)_a-(y_h)_3-\alpha_2+\beta_2)\phantom{,} \\[-0.3cm] \times ((y_k)_a-(y_h)_3-\alpha_1+\beta_1).
\end{split}
\end{equation}

The equations \eqref{fcqrata} for $n=3$ will be expressed in terms of the following polynomial in variables $r$ and $s$
\begin{equation}\label{multpolydefrat}
\begin{split}
P^{(r)}(\by_i,\by_j,\by_k;\by_h;\al,\bt;r,s) 
&= P^{(r)}_0(\by_i,\by_j,\by_k;\by_h;\al,\bt)(s^2-r) \\
&+P^{(r)}_1(\by_i,\by_j,\by_k;\by_h;\al,\bt)rs \\
&+P^{(r)}_2(\by_i,\by_j,\by_k;\by_h;\al,\bt).
\end{split}
\end{equation}

%The pair of equations \eqref{fcqrata} for $n=3$ may be expressed in four different ways, which are useful depending on which of the four variables $\by_i,\by_j,\by_k,\by_l$ one might want to solve for.  %These different forms of the equation are related through the symmetry \eqref{Asyms}, so in the following it will be enough to consider one form of the equation to solve for $\by_l$.
The following gives the analogue of Lemma \ref{lem:hyp} and can be shown in a similar way.
\begin{lemma}
For $n=3$, the pair of equations \eqref{fcqrata} may be written in the equivalent form
\begin{equation}\label{n3eqsrat}
\begin{split}
% \begin{gathered}
% \bigl\{
% P(\by_l,\by_k,\by_j;\by_h;\hat{\al},\hat{\bt};r_i,s_i)=0, 
% P(\by_l,\by_k,\by_j;\hat{\by}_h;\hat{\al},\hat{\bt};r_i,s_i)=0
% \bigr\},
% \\
% \bigl\{
% P(\by_k,\by_l,\by_i;\by_h;\hat{\al},\bt;r_j,s_j)=0, 
% P(\by_k,\by_l,\by_i;\hat{\by}_h;\hat{\al},\bt;r_j,s_j)=0
% \bigr\},
% \\
% \bigl\{
% P(\by_j,\by_i,\by_l;\by_h;\al,\hat{\bt};r_k,s_k)=0, 
% P(\by_j,\by_i,\by_l;\hat{\by}_h;\al,\hat{\bt};r_k,s_k)=0
% \bigr\},
% \\
% \bigl\{
P^{(r)}(\by_i,\by_j,\by_k;\by_h;\al,\bt;r_l,s_l)=0, \\
P^{(r)}(\by_i,\by_j,\by_k;\hat{\by}_h;\al,\bt;r_l,s_l)=0,
% \bigr\},
% \end{gathered}
\end{split}
\end{equation}
where $r_l=(y_l)_1(y_l)_2$ and $s_l=(y_l)_1+(y_l)_2$.%, and similarly for $r_j,r_k,r_l$ and $s_j,s_k,s_l$.%, and $\hat{\by}_h=(z_h,y_h)$.
\end{lemma}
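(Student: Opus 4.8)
The plan is to follow the proof of Lemma~\ref{lem:hyp} almost verbatim, reducing the two equations \eqref{fcqrata} (for $a=1,2$) to polynomial equations in which all dependence on $\by_l$ is carried by the symmetric combinations $r_l=(y_l)_1(y_l)_2$ and $s_l=(y_l)_1+(y_l)_2$. First I would observe that in \eqref{phidefrat} the index $a$ enters $\phi^{(r)}_a(\by_h,\cdot)$ only through $(\cby_h)_a$ in the denominators, whereas the numerators depend on $\by_h$ solely through the symmetric quantity $Y_h=(y_h)_1+(y_h)_2=-(y_h)_3$. Hence $\phi^{(r)}_2(\by_h,\cdot)=\phi^{(r)}_1(\hat{\by}_h,\cdot)$ and therefore $A^{(r)}_1(\by_h;\ldots)=A^{(r)}_2(\hat{\by}_h;\ldots)$. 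It is thus enough to treat one equation, say $A^{(r)}_2(\by_h;\ldots)=1$; the companion equation $A^{(r)}_1(\by_h;\ldots)=A^{(r)}_2(\hat{\by}_h;\ldots)=1$ then yields the second line of \eqref{n3eqsrat} by the substitution $\by_h\mapsto\hat{\by}_h$.

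Next I would clear denominators. Writing $\phi^{(r)}_2(\by_h,\by_x;\cdot)=N_x/D_x$ and cross-multiplying, $A^{(r)}_2(\by_h;\ldots)=1$ is equivalent to $N_l\,(N_iD_jD_k)-D_l\,(D_iN_jN_k)=0$, where the two bracketed factors no longer involve $\by_l$. Using $Y_h=-(y_h)_3$ and tracking the three sign factors $(-1)$ produced by each cubic denominator, one checks that $N_iD_jD_k=G^{(r)}_1$ and $D_iN_jN_k=-G^{(r)}_2$, so the equation becomes $N_lG^{(r)}_1+D_lG^{(r)}_2=0$ with
\[
N_l=\prod_{b=1}^3\bigl((y_l)_b+c\bigr),\ \ c=\beta_2-\alpha_1-(y_h)_3,\qquad
D_l=\prod_{b=1}^3\bigl(c'-(y_l)_b\bigr),\ \ c'=(y_h)_2+\alpha_1-\beta_2 .
\]

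The key step is then to impose the sum-to-zero constraint $(y_l)_3=-(y_l)_1-(y_l)_2$, under which the elementary symmetric polynomials of $(y_l)_1,(y_l)_2,(y_l)_3$ collapse to $e_1=0$, $e_2=r_l-s_l^2$, and $e_3=-r_ls_l$. This gives $N_l=c^3+c(r_l-s_l^2)-r_ls_l$ and $D_l=c'^3+c'(r_l-s_l^2)+r_ls_l$, so that $N_lG^{(r)}_1+D_lG^{(r)}_2$ is a polynomial whose coefficients of $(s_l^2-r_l)$, $r_ls_l$, and $1$ are precisely $-P^{(r)}_0$, $-P^{(r)}_1$, and $-P^{(r)}_2$ as defined before \eqref{multpolydefrat}. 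Thus the equation reads $-P^{(r)}(\by_i,\by_j,\by_k;\by_h;\al,\bt;r_l,s_l)=0$, which is the first line of \eqref{n3eqsrat}, and the hatted computation supplies the second line.

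The only genuine work, and the step most prone to sign errors, is this final identification: one must confirm that $c^3$ and $c'^3$ reproduce the cubes $((y_h)_3+\alpha_1-\beta_2)^3$ and $(\beta_2-\alpha_1-(y_h)_2)^3$ entering $P^{(r)}_2$, that $cG^{(r)}_1+c'G^{(r)}_2$ matches $P^{(r)}_0$, and that the opposite $r_ls_l$ contributions of $N_l$ and $D_l$ combine into the coefficient $G^{(r)}_2-G^{(r)}_1=-P^{(r)}_1$. Once the symmetric-function reduction $e_2=r_l-s_l^2$, $e_3=-r_ls_l$ is in place these are routine, which is exactly why the statement follows in the same manner as Lemma~\ref{lem:hyp}.
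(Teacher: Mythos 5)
Your proposal is correct and takes essentially the same route as the paper: the paper gives no separate argument for this lemma, stating only that it "can be shown in a similar way" to Lemma~\ref{lem:hyp}, whose proof is precisely the one-line idea you elaborate — clear denominators and observe that $(y_l)_1,(y_l)_2$ enter only through the symmetric combinations $r_l$ and $s_l$. Your sign bookkeeping ($N_iD_jD_k=G^{(r)}_1$, $D_iN_jN_k=-G^{(r)}_2$, and the reduction via $e_1=0$, $e_2=r_l-s_l^2$, $e_3=-r_ls_l$) checks out and correctly reproduces $-P^{(r)}_0$, $-P^{(r)}_1$, $-P^{(r)}_2$ as the coefficients.
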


%This is the analogue of Lemma \ref{lem:hyp} and .

% The proof of this is analogous to Lemma \ref{lem:hyp}.

% \begin{proof}
% %Consider the pair of equations that involve $r_l$ and $s_l$.  
% This may be obtained from \eqref{fcqhypa} by writing the equations as polynomials and noting that $y_l$ and $z_l$ only appear in the symmetric combinations $y_l+z_l$ or $y_lz_l$.  %The remaining pairs of equations follow from the symmetry of the equations \eqref{fcqhypa}.
% \end{proof}

% \begin{remark}
% One may use the symmetries \eqref{Asyms} to obtain equivalent equations written in terms of the pairs $(r_i,s_i)$, $(r_j,s_j)$, or $(r_k,s_k)$.
% \end{remark}

As for the hyperbolic case, solutions of \eqref{n3eqsrat} may be expressed in terms of roots of a cubic equation.  Define $F(\by_i,\by_j,\by_k;\by_h;\al,\bt;x)$ to be the following depressed cubic in the variable $x$
\begin{equation}\label{cubicratdef}
\begin{split}
F^{(r)}(\by_i,\by_j,\by_k;\by_h;\al,\bt;x) &=
F^{(r)}_0(\by_i,\by_j,\by_k;\by_h;\al,\bt) \\
&+F^{(r)}_1(\by_i,\by_j,\by_k;\by_h;\al,\bt)x \\
&+F^{(r)}_3(\by_i,\by_j,\by_k;\by_h;\al,\bt)x^3,
\end{split}
\end{equation}
where
\begin{equation}
\begin{split}
F^{(r)}_0(\by_i,\by_j,\by_k;\by_h;\al,\bt)&=P_2(\by_i,\by_j,\by_k;\by_h;\al,\bt)P_0(\by_i,\by_j,\by_k;\hat{\by}_h;\al,\bt) \\ &- P_0(\by_i,\by_j,\by_k;\by_h;\al,\bt)P_2(\by_i,\by_j,\by_k;\hat{\by}_h;\al,\bt), \\
F^{(r)}_1(\by_i,\by_j,\by_k;\by_h;\al,\bt)&=P_1(\by_i,\by_j,\by_k;\by_h;\al,\bt)P_0(\by_i,\by_j,\by_k;\hat{\by}_h;\al,\bt) \\ &- P_0(\by_i,\by_j,\by_k;\by_h;\al,\bt)P_1(\by_i,\by_j,\by_k;\hat{\by}_h;\al,\bt), \\
F^{(r)}_3(\by_i,\by_j,\by_k;\by_h;\al,\bt)&=P_1(\by_i,\by_j,\by_k;\by_h;\al,\bt)P_2(\by_i,\by_j,\by_k;\hat{\by}_h;\al,\bt) \\ &- P_2(\by_i,\by_j,\by_k;\by_h;\al,\bt)P_1(\by_i,\by_j,\by_k;\hat{\by}_h;\al,\bt).
\end{split}
\end{equation}

\begin{prop}\label{prop:ratroot}
Let $t_a$, $a=1,2,3$ denote the three roots of the cubic equation \eqref{cubicratdef}.  The solutions $\bigl((y_l)_1,(y_l)_2\bigr)$ of the 5-point equation \eqref{n3eqsrat} are given by the six pairs of distinct roots
\begin{equation}
\bigl((y_l)_1,(y_l)_2\bigr)=(t_a,t_b),\qquad a,b=1,2,3,\quad a\neq b.
\end{equation}
\end{prop}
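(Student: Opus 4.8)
The plan is to follow the proof of Proposition \ref{prop:hyproot} essentially line for line, replacing the multiplicative (product-to-one) bookkeeping of the hyperbolic case by the additive (sum-to-zero) bookkeeping that is forced by the rational parametrization $\by_l=\bigl((y_l)_1,(y_l)_2,-(y_l)_1-(y_l)_2\bigr)$. The starting point is the preceding Lemma, which recasts the pair \eqref{fcqrata} at $n=3$ as the two equations \eqref{n3eqsrat} in the symmetric combinations $r_l=(y_l)_1(y_l)_2$ and $s_l=(y_l)_1+(y_l)_2$. The goal is to show that the three components $(y_l)_1$, $(y_l)_2$, $-(y_l)_1-(y_l)_2$ are precisely the three roots of the depressed cubic \eqref{cubicratdef}, whose vanishing quadratic coefficient is the tell-tale sign of the sum-to-zero constraint, exactly as the opposite leading and constant coefficients of \eqref{solpoly} encoded product-to-one in the hyperbolic case.

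The first key step is the elimination. Whereas the hyperbolic polynomial \eqref{multpolydef} is affine-linear in $s_l$ alone, the rational polynomial \eqref{multpolydefrat} is affine-linear in the pair of monomials $s_l^2-r_l$ and $r_ls_l$, with coefficients $P^{(r)}_0$ and $P^{(r)}_1$ and inhomogeneous term $P^{(r)}_2$. Hence \eqref{n3eqsrat} is an inhomogeneous $2\times2$ linear system for the vector $(s_l^2-r_l,\,r_ls_l)$, whose two rows carry the arguments $\by_h$ and $\hat{\by}_h$. Solving it by Cramer's rule expresses $s_l^2-r_l$ and $r_ls_l$ as ratios of the $2\times2$ minors of this coefficient array, and these minors coincide, up to sign, with the coefficients $F^{(r)}_0$, $F^{(r)}_1$, $F^{(r)}_3$ defining \eqref{cubicratdef}.

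The second step is the identification with symmetric functions. Writing $t_1=(y_l)_1$, $t_2=(y_l)_2$, $t_3=-(y_l)_1-(y_l)_2$, one has $e_1(t_1,t_2,t_3)=0$ identically, while $e_2(t_1,t_2,t_3)=r_l-s_l^2$ and $e_3(t_1,t_2,t_3)=-r_ls_l$. Therefore the monic cubic $x^3+e_2\,x-e_3=x^3-(s_l^2-r_l)\,x+r_ls_l$ has $t_1,t_2,t_3$ as its roots, and clearing the Cramer determinant turns it, up to an overall nonzero factor, into \eqref{cubicratdef}. This is the additive counterpart of the hyperbolic argument, where the three components $(y_l)_1,(y_l)_2,1/r_l$ multiplied to one and solved \eqref{solpoly}; here they sum to zero and solve \eqref{cubicratdef}. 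Finally, because the equations \eqref{fcqrata} are invariant under permutations of the three components of $\by_l$, including the dependent component $-(y_l)_1-(y_l)_2$, each $t_a$ solves the same cubic; for generic parameters the three roots are distinct, so the admissible solutions are exactly the six ordered pairs $\bigl((y_l)_1,(y_l)_2\bigr)=(t_a,t_b)$ with $a\neq b$.

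I expect the main obstacle to be the bookkeeping in the elimination step: verifying that the $2\times2$ minors produced by Cramer's rule coincide, with the correct signs, with the stated coefficients $F^{(r)}_0$, $F^{(r)}_1$, $F^{(r)}_3$, and confirming that the determinant does not vanish generically so that \eqref{cubicratdef} genuinely has the three components as roots. The conceptual content—recognizing $s_l^2-r_l$ and $r_ls_l$ as the elementary symmetric functions of a sum-to-zero triple—is immediate, but matching it cleanly to the explicit polynomials $P^{(r)}_0,P^{(r)}_1,P^{(r)}_2$ requires the same careful, if routine, computation as in the hyperbolic case.
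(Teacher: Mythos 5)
Your proposal is correct and is essentially the argument the paper intends: the paper omits this proof and explicitly describes it as the proof of Proposition \ref{prop:hyproot} with $r_l$ and $s_l$ exchanged and sum-to-zero replacing product-to-one, which is exactly what you carry out (your Cramer's-rule/elementary-symmetric-function packaging is just a mildly cleaner organization of the same elimination, and it makes the final permutation-symmetry step automatic via Vieta). One small caveat worth noting: carrying out your elimination literally gives the monic cubic $x^3-(s_l^2-r_l)x+r_ls_l=0$ cleared to $F^{(r)}_1x^3+F^{(r)}_3x-F^{(r)}_0=0$, which matches \eqref{cubicratdef} only after swapping the labels of $F^{(r)}_1$ and $F^{(r)}_3$ and flipping the sign of the constant term, so the discrepancy appears to be a labelling typo in the paper's definitions rather than a gap in your argument.
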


%\begin{proof}

The proof of Proposition \ref{prop:ratroot} is omitted here for brevity, but is similar to the proof of Proposition \ref{prop:hyproot} with the role of $r_l$ and $s_l$ exchanged, and the sum of roots being equal to zero rather than the product of roots being equal to one.

%\end{proof}

\section{Lattice equations and consistency}\label{sec:latticeconsistency}

% The partition of the model of statistical mechanics described in Section \ref{sec:YBE} may be written as
% \begin{equation}
% Z = \int_{R^N}\prod_{j=1}^N d\spn_j S(\spn_j) 
% \prod_{(ij)\in E^{(1)}} W_{p-q'}(\spn_i,\spn_j)
% \prod_{(ij)\in E^{(2)}} W_{p'-q}(\spn_i,\spn_j)
% \prod_{(ij)\in E^{(3)}} \ow_{p-q}(\spn_j,\spn_i)
% \prod_{(ij)\in E^{(4)}} \ow_{p'-q'}(\spn_j,\spn_i)
% \end{equation}
In a quasi-classical limit of the partition function, the $n$-components spin variables $\spn_i$ at vertices $i$ are effectively replaced with the classical variables $\bx_i$ at the same vertices. From the results of Section \ref{sec:QCL}, the quasi-classical expansion of the partition function \eqref{Zdef} takes the form
\begin{equation}
Z=\int_{\mathbb{R}^{n-1}}\cdots\int_{\mathbb{R}^{n-1}}\prod_{k\in V^{(int)}} d\spn_k \exp\left(-\frac{\ii}{\hbar}\mathcal{A}(\bx)+O(1)\right),
\end{equation}
where 
\begin{equation}\label{actiondef}
\begin{split}
\mathcal{A}(\bx)=\sum_{i\in V^{(B)}\cup V^{(W)}}C(\bx_i)+\sum_{(ij)\in E^{(1)}}\lag_{u_1-v_2}(\bx_i,\bx_j)+\sum_{(ij)\in E^{(2)}}\lag_{u_2-v_1}(\bx_i,\bx_j)\phantom{,} \\
+\sum_{(ij)\in E^{(3)}}\lag_{u_1-v_1}(\bx_j,\bx_i)+\sum_{(ij)\in E^{(4)}}\lag_{u_2-v_2}(\bx_j,\bx_i).
\end{split}
\end{equation}
%Here the variables $\bx_i,\bx_j,\bx_k,\bx_l$ in the definition of $\lag^{(1)}(\bx_f;\bx)$ are the four nearest neighbours of the variable $\bx_f$.
In the quasi-classical limit $\hbar\to0$, the partition function is evaluated on the solutions of the saddle-point equations which are defined in this case by
\begin{equation}\label{eqmodef}
\frac{\partial\mathcal{A}(\bx)}{\partial (x_f)_h}=0,\qquad f\in V^{(int)},\quad h=1,\ldots,n-1.
\end{equation}
Using the change of variables of the form \eqref{hypcovdef} and taking the exponential, these equations may be put in the form
\begin{equation}\label{dlap1}
    A_i(\by_f;\by_a,\by_b,\by_c,\by_d;\al,\bt)=1,
    %\frac{\phi_a(\by_f,\by_i;\alpha_2,\beta_1)\phi_a(\by_f,\by_l;\alpha_1,\beta_2)}
    %{\phi_a(\by_f,\by_j;\alpha_2,\beta_2)\phi_a(\by_f,\by_k;\alpha_1,\beta_1)},
    \qquad f\in V^{(B)},\quad i=1,\ldots,n-1,
\end{equation}
where $(fa)\in E^{(1)}$, $(fb)\in E^{(2)}$, $(fc)\in E^{(3)}$, $(fd)\in E^{(4)}$, are edges that connect nearest-neighbour vertices $a,b,c,d\in V^{(W)}$ of the vertex $f\in V^{(B)}$ and
\begin{equation}\label{dlap2}
    A_i(\by_f;\by_a,\by_c,\by_b,\by_d;\bt,\al)=1,
    %\frac{\phi_a(\by_f,\by_i;\alpha_2,\beta_1)\phi_a(\by_f,\by_l;\alpha_1,\beta_2)}
    %{\phi_a(\by_f,\by_j;\alpha_2,\beta_2)\phi_a(\by_f,\by_k;\alpha_1,\beta_1)},
    \qquad f\in V^{(W)},\quad i=1,\ldots,n-1,
\end{equation}
where $(fa)\in E^{(4)}$, $(fb)\in E^{(3)}$, $(fc)\in E^{(2)}$, $(fd)\in E^{(1)}$, are edges that connect nearest-neighbour vertices $a,b,c,d\in V^{(B)}$ of the vertex $f\in V^{(W)}$.  The individual equations are defined on the 5-point configuration of vertices shown in Figure \ref{fig-dlap}.  Together the equations \eqref{dlap1} and \eqref{dlap2} define an $n$-component system of difference equations in the (rotated) square lattice.

\begin{figure}[htb]
\centering
\begin{tikzpicture}[scale=1.4]

\draw[-,thick] (-1,-1)--(1,1);
\draw[-,thick] (1,-1)--(-1,1);

\filldraw[fill=white!,draw=black!] (-1,-1) circle (1.5pt)
node[below=-8pt,right=10pt]{\color{black}$\tfrac{\alpha_1}{\beta_1}$}
node[left=1.5pt]{\color{black}$c$};
\filldraw[fill=white!,draw=black!] (1,-1) circle (1.5pt)
node[below=-8pt,left=10pt]{\color{black}$\tfrac{\alpha_1}{\beta_2}$}
node[right=1.5pt]{\color{black}$d$};
\filldraw[fill=white!,draw=black!] (-1,1) circle (1.5pt)
node[above=-8pt,right=10pt]{\color{black}$\tfrac{\alpha_2}{\beta_1}$}
node[left=1.5pt]{\color{black}$a$};
\filldraw[fill=white!,draw=black!] (1,1) circle (1.5pt)
node[above=-8pt,left=10pt]{\color{black}$\tfrac{\alpha_2}{\beta_2}$}
node[right=1.5pt]{\color{black}$b$};

\filldraw[fill=black!,draw=black!] (0,0) circle (1.5pt)
node[left=2.5pt]{\color{black}$f$};

% \draw[white!] (0,-1.8) circle (0.01pt)
% node[below=0.1pt]{\color{black}\small$\bW^{(B)}_{\bp\bq}$};

\begin{scope}[xshift=150pt]

\draw[-,thick] (-1,-1)--(1,1);
\draw[-,thick] (1,-1)--(-1,1);

\filldraw[fill=black!,draw=black!] (-1,-1) circle (1.5pt)
node[below=-8pt,right=10pt]{\color{black}$\tfrac{\beta_2}{\alpha_2}$}
node[left=1.5pt]{\color{black}$c$};
\filldraw[fill=black!,draw=black!] (1,-1) circle (1.5pt)
node[below=-8pt,left=10pt]{\color{black}$\tfrac{\beta_1}{\alpha_2}$}
node[right=1.5pt]{\color{black}$d$};
\filldraw[fill=black!,draw=black!] (-1,1) circle (1.5pt)
node[above=-8pt,right=10pt]{\color{black}$\tfrac{\beta_2}{\alpha_1}$}
node[left=1.5pt]{\color{black}$a$};
\filldraw[fill=black!,draw=black!] (1,1) circle (1.5pt)
node[above=-8pt,left=10pt]{\color{black}$\tfrac{\beta_1}{\alpha_1}$}
node[right=1.5pt]{\color{black}$b$};

\filldraw[fill=white!,draw=black!] (0,0) circle (1.5pt)
node[left=2.5pt]{\color{black}$f$};

% \draw[white!] (0,-1.8) circle (0.01pt)
% node[below=0.1pt]{\color{black}\small$\bW^{(W)}_{\bp\bq}$};

\end{scope}
\end{tikzpicture}

\caption{The 5-point equations \eqref{dlap1} and \eqref{dlap2}.}
\label{fig-dlap}
\end{figure}
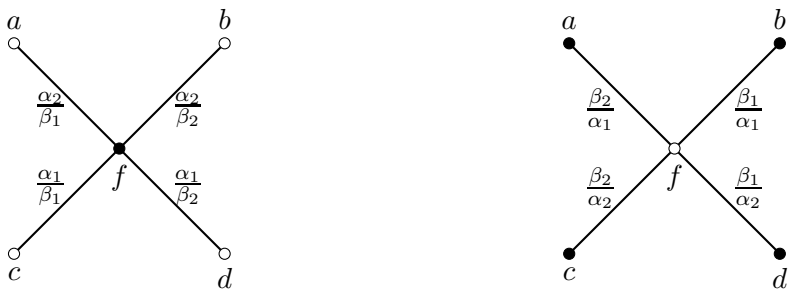

An analysis of solutions of the system of the equations \eqref{dlap1} and \eqref{dlap2} for $n>2$ goes beyond the scope of this paper. %, but as shown $n=4,5$ suggest that the individual equations may be solved for one of the four variables $\by_a,\by_b,\by_c,\by_d$ in terms of the others and the solution is unique up to permutations of the components.  
Numerical computations for small $n$ suggest that the individual equations may be solved for one of the four variables $\by_a,\by_b,\by_c,\by_d$ in terms of the others and the solution is unique up to permutations of the components, as was seen in the previous section for $n=2$ and $n=3$.  Thus, one may interpret \eqref{dlap1} and \eqref{dlap2} as a system of multicomponent evolution equations in the square lattice, analogously to the scalar $n=2$ case \cite{Kels:2020zjn}.  Typical examples of initial conditions are shown in Figure \ref{fig-initial} for evolutions in the north-east direction of the lattice (see also \cite{GubbiottiKels} for more general types of applicable initial conditions).

Note that in the context of variational principles, the system of equations \eqref{dlap1} and \eqref{dlap2} may be regarded as coming from the equations of motion \eqref{eqmodef} for the action $\mathcal{A}(\bx)$, while the partition function provides a natural quantization of the latter action.  Note also that here only the leading order $O(\hbar^{-1})$ is being considered, and it is likely that other potentially interesting equations can be found at subleading orders of the quasi-classical expansion.  However, determining further orders of the expansion would require more complicated computations than have been considered here and is beyond the scope of this paper.

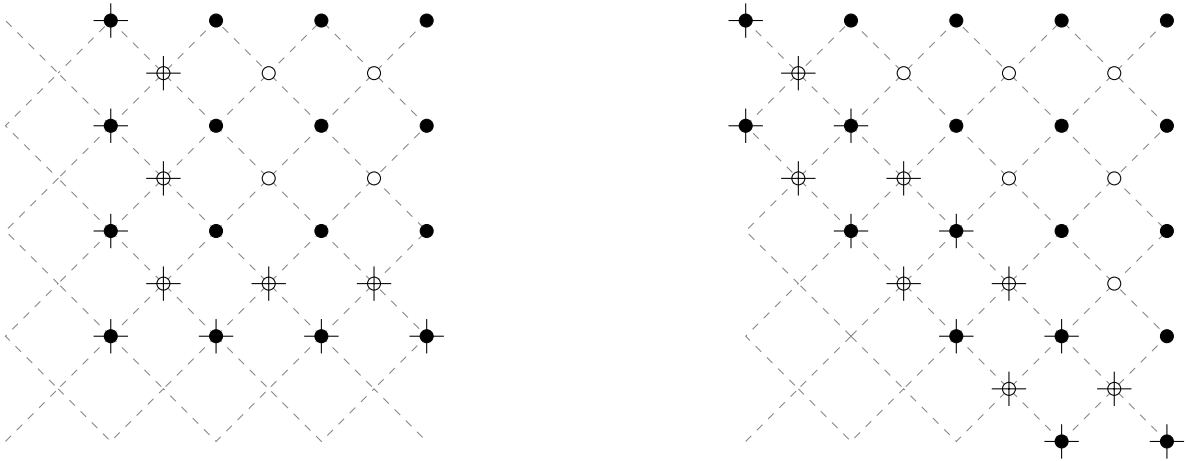
\begin{figure}[htb]
\centering
\begin{tikzpicture}[scale=0.65]

\draw[-,gray,very thin,dashed] (0,0)--(-1,-1);
\draw[-,gray,very thin,dashed] (0,0)--(-1,1);
\draw[-,gray,very thin,dashed] (0,0)--(1,1);
\draw[-,gray,very thin,dashed] (0,0)--(1,-1);

\draw[-,gray,very thin,dashed] (-2,-2)--(-3,-3);
\draw[-,gray,very thin,dashed] (-2,-2)--(-3,-1);
% \draw[-,thick] (-2,-2)--(-1,-1);
\draw[-,gray,very thin,dashed] (-2,-2)--(-1,-1);
\draw[-,gray,very thin,dashed] (-2,-2)--(-1,-3);

\draw[-,gray,very thin,dashed] (-2,0)--(-3,-1);
\draw[-,gray,very thin,dashed] (-2,0)--(-3,1);
% \draw[-,thick] (-2,0)--(-1,1);
% \draw[-,thick] (-2,0)--(-1,-1);
\draw[-,gray,very thin,dashed] (-2,0)--(-1,1);
\draw[-,gray,very thin,dashed] (-2,0)--(-1,-1);

\draw[-,gray,very thin,dashed] (-2,2)--(-3,1);
\draw[-,gray,very thin,dashed] (-2,2)--(-3,3);
% \draw[-,thick] (-2,2)--(-1,3);
% \draw[-,thick] (-2,2)--(-1,1);
\draw[-,gray,very thin,dashed] (-2,2)--(-1,3);
\draw[-,gray,very thin,dashed] (-2,2)--(-1,1);

\draw[-,gray,very thin,dashed] (0,2)--(-1,1);
\draw[-,gray,very thin,dashed] (0,2)--(-1,3);
\draw[-,gray,very thin,dashed] (0,2)--(1,3);
\draw[-,gray,very thin,dashed] (0,2)--(1,1);

\draw[-,gray,very thin,dashed] (2,2)--(1,1);
\draw[-,gray,very thin,dashed] (2,2)--(1,3);
\draw[-,gray,very thin,dashed] (2,2)--(3,3);
\draw[-,gray,very thin,dashed] (2,2)--(3,1);

\draw[-,gray,very thin,dashed] (2,0)--(1,-1);
\draw[-,gray,very thin,dashed] (2,0)--(1,1);
\draw[-,gray,very thin,dashed] (2,0)--(3,1);
\draw[-,gray,very thin,dashed] (2,0)--(3,-1);

\draw[-,gray,very thin,dashed] (2,-2)--(1,-3);
% \draw[-,thick] (2,-2)--(1,-1);
% \draw[-,thick] (2,-2)--(3,-1);
\draw[-,gray,very thin,dashed] (2,-2)--(1,-1);
\draw[-,gray,very thin,dashed] (2,-2)--(3,-1);
\draw[-,gray,very thin,dashed] (2,-2)--(3,-3);

\draw[-,gray,very thin,dashed] (0,-2)--(-1,-3);
% \draw[-,thick] (0,-2)--(-1,-1);
% \draw[-,thick] (0,-2)--(1,-1);
\draw[-,gray,very thin,dashed] (0,-2)--(-1,-1);
\draw[-,gray,very thin,dashed] (0,-2)--(1,-1);
\draw[-,gray,very thin,dashed] (0,-2)--(1,-3);

\draw[-,gray,very thin,dashed] (-4,-4)--(-3,-3)--(-4,-2)--(-3,-1)--(-4,0)--(-3,1)--(-4,2)--(-3,3)--(-4,4);
\draw[-,gray,very thin,dashed]          (-3,-3)--(-2,-4)--(-1,-3)--(0,-4)--(1,-3)--(2,-4)--(3,-3)--(4,-4);
\draw[-,gray,very thin,dashed] (4,4)--(3,3)--(4,2)--(3,1)--(4,0)--(3,-1)--(4,-2)--(3,-3);
\draw[-,gray,very thin,dashed]        (3,3)--(2,4)--(1,3)--(0,4)--(-1,3)--(-2,4)--(-3,3);

% \draw[-,thick] (-2,4)--(-1,3);\draw[-,thick] (3,-1)--(4,-2);

\filldraw[draw=black,fill=black] (-2,-2) circle (3.5pt)
node[cross=5pt,rotate=45]{};

\filldraw[draw=black,fill=black] (-2,-0) circle (3.5pt)
node[cross=5pt,rotate=45]{};
\filldraw[draw=black,fill=black] (0,-2) circle (3.5pt)
node[cross=5pt,rotate=45]{};
\filldraw[draw=black,fill=white] (-1,-1) circle (3.5pt)
node[cross=5pt,rotate=45]{};

\filldraw[draw=black,fill=black] (0,-0) circle (3.5pt);
\filldraw[draw=black,fill=black] (-2,2) circle (3.5pt)
node[cross=5pt,rotate=45]{};
\filldraw[draw=black,fill=black] (2,-2) circle (3.5pt)
node[cross=5pt,rotate=45]{};
\filldraw[draw=black,fill=white] (-1,1) circle (3.5pt)
node[cross=5pt,rotate=45]{};
\filldraw[draw=black,fill=white] (1,-1) circle (3.5pt)
node[cross=5pt,rotate=45]{};

\filldraw[draw=black,fill=black] (-2,4) circle (3.5pt)
node[cross=5pt,rotate=45]{};
\filldraw[draw=black,fill=black] (-0,2) circle (3.5pt);
\filldraw[draw=black,fill=black] (2,-0) circle (3.5pt);
\filldraw[draw=black,fill=black] (4,-2) circle (3.5pt)
node[cross=5pt,rotate=45]{};
\filldraw[draw=black,fill=white] (-1,3) circle (3.5pt)
node[cross=5pt,rotate=45]{};
\filldraw[draw=black,fill=white] (1,1) circle (3.5pt);
\filldraw[draw=black,fill=white] (3,-1) circle (3.5pt)
node[cross=5pt,rotate=45]{};

\filldraw[draw=black,fill=black] (0,4) circle (3.5pt);
\filldraw[draw=black,fill=black] (2,2) circle (3.5pt);
\filldraw[draw=black,fill=black] (4,-0) circle (3.5pt);
\filldraw[draw=black,fill=white] (1,3) circle (3.5pt);
\filldraw[draw=black,fill=white] (3,1) circle (3.5pt);

\filldraw[draw=black,fill=black] (2,4) circle (3.5pt);
\filldraw[draw=black,fill=black] (4,2) circle (3.5pt);
\filldraw[draw=black,fill=white] (3,3) circle (3.5pt);

\filldraw[draw=black,fill=black] (4,4) circle (3.5pt);

\begin{scope}[xshift=400]

% \draw[-,thick] (0,0)--(-1,-1);
% \draw[-,thick] (0,0)--(-1,1);
\draw[-,gray,very thin,dashed] (0,0)--(-1,-1);
\draw[-,gray,very thin,dashed] (0,0)--(-1,1);
\draw[-,gray,very thin,dashed] (0,0)--(1,1);
% \draw[-,thick] (0,0)--(1,-1);
\draw[-,gray,very thin,dashed] (0,0)--(1,-1);

\draw[-,gray,very thin,dashed] (-2,-2)--(-3,-3);
\draw[-,gray,very thin,dashed] (-2,-2)--(-3,-1);
\draw[-,gray,very thin,dashed] (-2,-2)--(-1,-1);
\draw[-,gray,very thin,dashed] (-2,-2)--(-1,-3);

\draw[-,gray,very thin,dashed] (-2,0)--(-3,-1);
\draw[-,gray,very thin,dashed] (-2,0)--(-3,1);
% \draw[-,thick] (-2,0)--(-1,1);
\draw[-,gray,very thin,dashed] (-2,0)--(-1,1);
\draw[-,gray,very thin,dashed] (-2,0)--(-1,-1);

% \draw[-,thick] (-2,2)--(-3,1);
% \draw[-,thick] (-2,2)--(-3,3);
\draw[-,gray,very thin,dashed] (-2,2)--(-3,1);
\draw[-,gray,very thin,dashed] (-2,2)--(-3,3);
\draw[-,gray,very thin,dashed] (-2,2)--(-1,3);
% \draw[-,thick] (-2,2)--(-1,1);
\draw[-,gray,very thin,dashed] (-2,2)--(-1,1);

\draw[-,gray,very thin,dashed] (0,2)--(-1,1);
\draw[-,gray,very thin,dashed] (0,2)--(-1,3);
\draw[-,gray,very thin,dashed] (0,2)--(1,3);
\draw[-,gray,very thin,dashed] (0,2)--(1,1);

\draw[-,gray,very thin,dashed] (2,2)--(1,1);
\draw[-,gray,very thin,dashed] (2,2)--(1,3);
\draw[-,gray,very thin,dashed] (2,2)--(3,3);
\draw[-,gray,very thin,dashed] (2,2)--(3,1);

\draw[-,gray,very thin,dashed] (2,0)--(1,-1);
\draw[-,gray,very thin,dashed] (2,0)--(1,1);
\draw[-,gray,very thin,dashed] (2,0)--(3,1);
\draw[-,gray,very thin,dashed] (2,0)--(3,-1);

% \draw[-,thick] (2,-2)--(1,-3);
% \draw[-,thick] (2,-2)--(1,-1);
\draw[-,gray,very thin,dashed] (2,-2)--(1,-3);
\draw[-,gray,very thin,dashed] (2,-2)--(1,-1);
\draw[-,gray,very thin,dashed] (2,-2)--(3,-1);
% \draw[-,thick] (2,-2)--(3,-3);
\draw[-,gray,very thin,dashed] (2,-2)--(3,-3);

\draw[-,gray,very thin,dashed] (0,-2)--(-1,-3);
\draw[-,gray,very thin,dashed] (0,-2)--(-1,-1);
% \draw[-,thick] (0,-2)--(1,-1);
\draw[-,gray,very thin,dashed] (0,-2)--(1,-1);
\draw[-,gray,very thin,dashed] (0,-2)--(1,-3);

\draw[-,gray,very thin,dashed] (-4,2)--(-3,3)--(-4,4);\draw[-,gray,very thin,dashed] (2,-4)--(3,-3)--(4,-4);

\draw[-,gray,very thin,dashed] (-4,-4)--(-3,-3)--(-4,-2)--(-3,-1)--(-4,0)--(-3,1)--(-4,2);%--(-3,3)--(-4,4);
\draw[-,gray,very thin,dashed]          (-3,-3)--(-2,-4)--(-1,-3)--(0,-4)--(1,-3)--(2,-4);%--(3,-3)%--(4,-4);
\draw[-,gray,very thin,dashed] (4,4)--(3,3)--(4,2)--(3,1)--(4,0)--(3,-1)--(4,-2)--(3,-3);
\draw[-,gray,very thin,dashed]        (3,3)--(2,4)--(1,3)--(0,4)--(-1,3)--(-2,4)--(-3,3);

\filldraw[draw=black,fill=black] (-4,2) circle (3.5pt)
node[cross=5pt,rotate=45]{};
\filldraw[draw=black,fill=black] (-2,0) circle (3.5pt)
node[cross=5pt,rotate=45]{};
\filldraw[draw=black,fill=black] (-0,-2) circle (3.5pt)
node[cross=5pt,rotate=45]{};
\filldraw[draw=black,fill=black] (2,-4) circle (3.5pt)
node[cross=5pt,rotate=45]{};
\filldraw[draw=black,fill=white] (-3,1) circle (3.5pt)
node[cross=5pt,rotate=45]{};
\filldraw[draw=black,fill=white] (-1,-1) circle (3.5pt)
node[cross=5pt,rotate=45]{};
\filldraw[draw=black,fill=white] (1,-3) circle (3.5pt)
node[cross=5pt,rotate=45]{};

\filldraw[draw=black,fill=black] (-4,4) circle (3.5pt)
node[cross=5pt,rotate=45]{};
\filldraw[draw=black,fill=black] (-2,2) circle (3.5pt)
node[cross=5pt,rotate=45]{};
\filldraw[draw=black,fill=black] (-0,0) circle (3.5pt)
node[cross=5pt,rotate=45]{};
\filldraw[draw=black,fill=black] (2,-2) circle (3.5pt)
node[cross=5pt,rotate=45]{};
\filldraw[draw=black,fill=black] (4,-4) circle (3.5pt)
node[cross=5pt,rotate=45]{};
\filldraw[draw=black,fill=white] (-3,3) circle (3.5pt)
node[cross=5pt,rotate=45]{};
\filldraw[draw=black,fill=white] (-1,1) circle (3.5pt)
node[cross=5pt,rotate=45]{};
\filldraw[draw=black,fill=white] (1,-1) circle (3.5pt)
node[cross=5pt,rotate=45]{};
\filldraw[draw=black,fill=white] (3,-3) circle (3.5pt)
node[cross=5pt,rotate=45]{};

\filldraw[draw=black,fill=black] (-2,4) circle (3.5pt);
\filldraw[draw=black,fill=black] (-0,2) circle (3.5pt);
\filldraw[draw=black,fill=black] (2,-0) circle (3.5pt);
\filldraw[draw=black,fill=black] (4,-2) circle (3.5pt);
\filldraw[draw=black,fill=white] (-1,3) circle (3.5pt);
\filldraw[draw=black,fill=white] (1,1) circle (3.5pt);
\filldraw[draw=black,fill=white] (3,-1) circle (3.5pt);

\filldraw[draw=black,fill=black] (0,4) circle (3.5pt);
\filldraw[draw=black,fill=black] (2,2) circle (3.5pt);
\filldraw[draw=black,fill=black] (4,-0) circle (3.5pt);
\filldraw[draw=black,fill=white] (1,3) circle (3.5pt);
\filldraw[draw=black,fill=white] (3,1) circle (3.5pt);

\filldraw[draw=black,fill=black] (2,4) circle (3.5pt);
\filldraw[draw=black,fill=black] (4,2) circle (3.5pt);
\filldraw[draw=black,fill=white] (3,3) circle (3.5pt);

\filldraw[draw=black,fill=black] (4,4) circle (3.5pt);

% \foreach \x in {-4,...,4}{
% \fill[black] (\x,-\x) circle (3.5pt)
% node[cross=5pt,rotate=45]{};}

% \foreach \x in {-3,...,3}{
% \fill[black] (\x-1,-\x-1) circle (3.5pt);}

% \foreach \x in {-3,...,3}{
% \filldraw[fill=white,draw=black] (-\x+1,\x+1) circle (3.5pt);}

% \foreach \x in {-2,...,2}{
% \filldraw[fill=white,draw=black] (-\x+2,\x+2) circle (3.5pt);}

% \foreach \x in {-1,...,1}{
% \filldraw[fill=white,draw=black] (-\x+3,\x+3) circle (3.5pt);}

% \foreach \x in {0,...,0}{
% \filldraw[fill=white,draw=black] (-\x+4,\x+4) circle (3.5pt);}

\end{scope}

\end{tikzpicture}
\caption{Corner- and staircase-type initial conditions indicated by crosses.}
\label{fig-initial}
\end{figure}

\subsection{ IRF Yang-Baxter equation and consistency}
For the $n=2$ case, the equations \eqref{dlap1} and \eqref{dlap2} were shown to be consistent as an overdetermined system of 14 equations defined on vertices of a face-centered cubic unit cell \cite{Kels:2020zjn}.  The extension of the system of 14 equations for $n>2$ may be deduced from the quasi-classical expansion of the IRF form of the YBE given in \eqref{YBE-IRF}.  First, it is convenient to relabel the variables in \eqref{YBE-IRF} as
\begin{equation}
\spn_c\to\spn_{a'},\;\spn_d\to\spn_c,\;\spn_e\to\spn_{c'},\;\spn_f\to\spn_d,\;\spn_g\to\spn_{b'},
\end{equation}
and denote $\bp=(p_1,p_2),\bq=(q_1,q_2),\br=(r_1,r_2)$.  Then the IRF YBE may be written in terms of $V^{(B)}$ as
\begin{equation}\label{YBEcov}
\begin{split}
\int_{\mathbb{R}^{n-1}}d\spn_aS(\spn_a) X_\bq(\spn_c,\spn_{c'},\spn_a,\spn_{a'})
%W_{q'-q}(\spn_c,\spn_{c'})W_{q-q'}(\spn_{a'},\spn_a)
V^{(B)}_{\bp\bq}(\spn_{a'},\spn_a,\spn_{c'},\spn_c)
V^{(B)}_{\bp\br}(\spn_a,\spn_b,\spn_c,\spn_d)
V^{(B)}_{\bq\br}(\spn_{a'},\spn_{b'},\spn_a,\spn_b)=  \\
\int_{\mathbb{R}^{n-1}}d\spn_{d'}S(\spn_{d'}) X_\bq(\spn_d,\spn_{d'},\spn_b,\spn_{b'})
%W_{q'-q}(\spn_d,\spn_{d'})W_{q-q'}(\spn_{b'},\spn_b)
V^{(B)}_{\bp\bq}(\spn_{b'},\spn_b,\spn_{d'},\spn_d)
V^{(B)}_{\bp\br}(\spn_{a'},\spn_{b'},\spn_{c'},\spn_{d'})
V^{(B)}_{\bq\br}(\spn_{c'},\spn_{d'},\spn_c,\spn_d),
\end{split}
\end{equation}
where 
\begin{equation}
X_\bq(\spn_c,\spn_{c'},\spn_a,\spn_{a'})=W_{q_2-q_1}(\spn_c,\spn_{c'})W_{q_1-q_2}(\spn_{a'},\spn_a).
\end{equation}
Taking a quasi-classical expansion as in \eqref{qclas}, along with $r_j\to \frac{w_j}{\sqrt{2\pi\hbar}}$, gives asymptotics of \eqref{YBEcov} of the form
\begin{equation}\label{YBEqcl}
\begin{split}
\int_{\mathbb{R}^{n-1}}\cdots\int_{\mathbb{R}^{n-1}}d\spn_ad\spn_gd\spn_ed\spn_f\exp\left(-\frac{\ii}{\hbar}\mathcal{A}^{(L)}(\bx)+O(1)\right)\phantom{,} \\
=\int_{\mathbb{R}^{n-1}}\cdots\int_{\mathbb{R}^{n-1}}d\spn_{d'}d\spn_{g'}d\spn_{e'}d\spn_{f'}\exp\left(-\frac{\ii}{\hbar}\mathcal{A}^{(R)}(\bx)+O(1)\right),
\end{split}
\end{equation}
where
\begin{equation}
\begin{split}
\mathcal{A}^{(L)}(\bx)
=\lag^{(B)}_{\bu\bv}(\bx_g;\bx_{a'},\bx_a,\bx_{c'},\bx_c)
+\lag^{(B)}_{\bu\bw}(\bx_e;\bx_a,\bx_b,\bx_c,\bx_d)
+\lag^{(B)}_{\bv\bw}(\bx_f;\bx_{a'},\bx_{b'},\bx_a,\bx_b)\phantom{,} \\
+\lag_{v_2-v_1}(\bx_c,\bx_{c'})+\lag_{v_1-v_2}(\bx_{a'},\bx_a), \\
\mathcal{A}^{(R)}(\bx)
=\lag^{(B)}_{\bu\bv}(\bx_{g'};\bx_{b'},\bx_b,\bx_{d'},\bx_d)
+\lag^{(B)}_{\bu\bw}(\bx_{e'};\bx_{a'},\bx_{b'},\bx_{c'},\bx_{d'})
+\lag^{(B)}_{\bv\bw}(\bx_{f'};\bx_{c'},\bx_{d'},\bx_c,\bx_d)\phantom{,} \\
+\lag_{v_2-v_1}(\bx_d,\bx_{d'})+\lag_{v_1-v_2}(\bx_{b'},\bx_b),
\end{split}
\end{equation}
and $\bu=(u_1,u_2)$, $\bv=(v_1,v_2)$, and $\bw=(w_1,w_2)$.  The saddle-point equations for the integrals of \eqref{YBEqcl} may be written in the combined form
\begin{equation}
\begin{gathered}
\frac{\partial\bigl(\mathcal{A}^{(L)}(\bx)-\mathcal{A}^{(R)}(\bx)\bigr)}{\partial (\cbx_\delta)_i}=0,\quad i=1,\ldots,n-1,\; \delta=a,e,f,g,d'e'f'g'.
%\frac{\partial\mathcal{A}^{(L)}(\bx)}{\partial (\cbx_I)_a}=0,\quad a=1,\ldots,n-1,\; I=a,e,f,g,
\end{gathered}
\end{equation}
Using a change of variables as given in \eqref{hypcovdef}, but with a slightly different change of variables for the parameters
\begin{equation}
\alpha_j=\EXP^{\ii u_j},\quad \beta_j=\EXP^{\ii w_j},\quad \gamma_j=\EXP^{\ii v_j}, \qquad j=1,2,
\end{equation}
the exponentials of the above saddle-point equations may be written in the form
\begin{equation}\label{cafcc8}\begin{split} 
A_i(\by_g;\by_{a'},\by_a,\by_{c'},\by_c;\al,\gm)=1,\quad &
A_i(\by_{g'};\by_{g'};\by_{b'},\by_b,\by_{d'};\al,\gm)=1, \\
A_i(\by_e;\by_a,\by_b,\by_c,\by_d;\al,\bt)=1, \quad &
A_i(\by_{e'};\by_{a'},\by_{b'},\by_{c'},\by_{d'};\al,\bt)=1, \\
A_i(\by_f;\by_{a'},\by_{b'},\by_a,\by_b;\gm,\bt)=1, \quad &
A_i(\by_{f'};\by_{c'},\by_{d'},\by_c,\by_d;\gm,\bt)=1, \\
A_i(\by_a;\by_g,\by_{a'},\by_e,\by_f;(\beta_1,\gamma_2),(\alpha_2,\gamma_1))=1, \quad &
A_i(\by_{d'};\by_{g'},\by_{d},\by_{e'},\by_{f'};(\beta_2,\gamma_1),(\alpha_1,\gamma_2))=1,
\end{split}\end{equation}
for $i=1,\ldots,n-1$.  Equating both sides of \eqref{YBEqcl} at leading order $O(\hbar^{-1})$ gives
\begin{equation}
\mathcal{A}^{(L)}=\mathcal{A}^{(R)},
\end{equation}
which is assumed to hold on solutions of the saddle-point equations \eqref{cafcc8}.  The latter equation may be differentiated with respect to components of the six variables $\bx_b,\bx_c,\bx_d,\bx_{b'},\bx_{c'},\bx_{a'}$, which results in a further six equations
\begin{equation}\label{cafcc6}\begin{split}
A_i(\by_b;\by_{g'},\by_{b'},\by_e,\by_f;(\beta_2,\gamma_2),(\alpha_2,\gamma_1))=1, \quad &
A_i(\by_{b'};\by_{g'},\by_b,\by_{e'},\by_f;(\beta_2,\gamma_1),(\alpha_2,\gamma_2))=1, \\ 
A_i(\by_{c};\by_g,\by_{c'},\by_e,\by_{f'};(\beta_1,\gamma_2),(\alpha_1,\gamma_1))=1, \quad &
A_i(\by_{c'};\by_g,\by_c,\by_{e'},\by_{f'};(\beta_1,\gamma_1),(\alpha_1,\gamma_2))=1, \\
A_i(\by_d;\by_{g'},\by_{d'},\by_e,\by_{f'};(\beta_2,\gamma_2),(\alpha_1,\gamma_1))=1, \quad &
A_i(\by_{a'};\by_g,\by_a,\by_{e'},\by_f;(\beta_1,\gamma_1),(\alpha_2,\gamma_2))=1,
\end{split}\end{equation}
for $i=1,\ldots,n-1$.  

The 14 equations in \eqref{cafcc8} and \eqref{cafcc6} are the $n-1$-component analogues of the fourteen scalar equations that were used for the formulation of consistency-around-a-face-centered-cube (CAFCC) in \cite{Kels:2020zjn}.  %Here they have been obtained from a quasi-classical expansion of the IRF formulation of the edge-interaction lattice model in Section \ref{sec:YBE}.  
Specifically, the 14 equations are assigned to the face-centered cubic unit cell shown in Figure \ref{fig-CAFCC}, and CAFCC requires that the fourteen equations in \eqref{cafcc8} and \eqref{cafcc6} are consistent.  Furthermore, for scalar equations ($n=2$) there is a procedure to derive Lax pairs from a consistent set of CAFCC equations \cite{KelsLax1,KelsLax2}.  Similarly to \cite{Kels:2020zjn}, one way to check CAFCC is to fix the six $n-1$-component variables $\by_a,\by_b,\by_c,\by_e,\by_f,\by_g$, and the fourteen equations should agree for solutions of the remaining eight variables.  An analytic verification of consistency for $n>2$ is beyond the scope of this paper, but the $n-1$-component systems of equations \eqref{fcqhypa} and \eqref{fcqrata} have been numerically checked to be consistent for $n=3,4,5$.

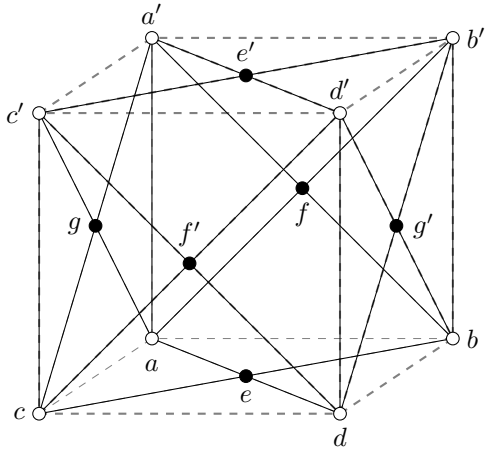
\begin{figure}[hbt!]
\centering
\begin{tikzpicture}[scale=1.0]

\draw[-,gray,thick,dashed] (2.5,4)--(-1.5,4)--(-3,3)--(1,3)--(2.5,4);
\draw[-,gray,thick,dashed] (-3,-1)--(1,-1)--(2.5,0);\draw[-,gray,very thin,dashed] (2.5,0)--(-1.5,0)--(-3,-1);

\draw[-,gray,very thin,dashed] (-3,-1)--(2.5,0);\draw[-,gray,very thin,dashed] (-1.5,0)--(1,-1);
\draw[-,gray,thick,dashed] (-3,3)--(-3,-1);\draw[-,gray,very thin,dashed] (-3,-1)--(-1.5,4);\draw[-,gray,very thin,dashed] (-3,3)--(-1.5,0)--(-1.5,4);
\draw[-,gray,very thin,dashed] (-1.5,0)--(2.5,4);\draw[-,gray,very thin,dashed] (-1.5,4)--(2.5,0);

\draw[-,gray,thick,dashed] (-3,-1)--(1,3);\draw[-,gray,thick,dashed] (-3,3)--(1,-1);
\draw[-,gray,thick,dashed] (1,-1)--(2.5,4)--(2.5,0);\draw[-,gray,thick,dashed] (1,-1)--(1,3)--(2.5,0);
\draw[-,gray,thick,dashed] (-3,3)--(2.5,4);\draw[-,gray,thick,dashed] (-1.5,4)--(1,3);

\draw[-] (-3,-1)--(2.5,0);\draw[-] (1,-1)--(-1.5,0);%--(-1,4);
\draw[-] (-3,3)--(-3,-1)--(-1.5,4);\draw[-] (-3,3)--(-1.5,0)--(-1.5,4);
\draw[-] (-1.5,0)--(2.5,4);\draw[-] (-1.5,4)--(2.5,0);

\draw[-] (-3,-1)--(1,3);\draw[-] (-3,3)--(1,-1);
\draw[-] (1,-1)--(2.5,4)--(2.5,0);\draw[-] (1,-1)--(1,3)--(2.5,0);
\draw[-] (-3,3)--(2.5,4);\draw[-] (-1.5,4)--(1,3);

\filldraw[fill=white,draw=black] (1,-1) circle (2.4pt) node[below=1.5pt]{\small $d$};
\filldraw[fill=black,draw=black] (-0.25,-0.5) circle (2.4pt) node[below=1.5pt]{\small $e$};
\filldraw[fill=white,draw=black] (-1.5,0) circle (2.4pt) node[below=4pt]{\color{black}\small $a$};
\filldraw[fill=white,draw=black] (-3,-1) circle (2.4pt) node[left=1.5pt]{\small $c$};
\filldraw[fill=white,draw=black] (2.5,0) circle (2.4pt) node[right=1.5pt]{\color{black}\small $b$};
\filldraw[fill=white,draw=black] (-3,3) circle (2.4pt) node[left=1.5pt]{\color{black}\small $c'$};
\filldraw[fill=white,draw=black] (-1.5,4) circle (2.4pt) node[above=1.5pt]{\small $a'$};
\filldraw[fill=black,draw=black] (-2.25,1.5) circle (2.4pt) node[left=1.5pt]{\color{black}\small $g$};
\filldraw[fill=black,draw=black] (0.5,2) circle (2.4pt) node[below=1.5pt]{\small $f$};
\filldraw[fill=white,draw=black] (2.5,4) circle (2.4pt) node[right=1.5pt]{\small $b'$};

\filldraw[fill=white,draw=black] (1,3) circle (2.4pt) node[above=1.5pt]{\small $d'$};
\filldraw[fill=black,draw=black] (-0.25,3.5) circle (2.4pt) node[above=1.5pt]{\small $e'$};
\filldraw[fill=black,draw=black] (-1.0,1) circle (2.4pt) node[above=2.5pt]{\small $f'$};
\filldraw[fill=black,draw=black] (1.75,1.5) circle (2.4pt) node[right=2.5pt]{\small $g'$};

%\fill[white] (-2.5,4.0) circle (0.01pt) node[left=1pt]{\color{black} Steps 5 \& 6.};

%\end{scope}

%\end{scope}

\end{tikzpicture}
\caption{The fourteen equations \eqref{cafcc8} and \eqref{cafcc6}. }% Note that vertices (and associated variables)  that appear on both sides are identified.}
\label{fig-CAFCC}
\end{figure}

 \section{Conclusion}

This paper has investigated the quasi-classical expansion of a particular hyperbolic solution of the star-star relation that is associated to multivariate hyperbolic hypergeometric functions.  Difference equations that were derived from the quasi-classical expansion were shown to provide multicomponent extensions of known integrable scalar 5-point difference equations that were previously studied in the context of consistency on face-centered cubics \cite{Kels:2020zjn}.  Integrability of these 5-point multicomponent difference equations was proposed in the form of the consistency-around-a-face-centered-cubic property, which is essentially a reformulation of equations that arise in the quasi-classical expansion of the IRF Yang-Baxter equation associated to the edge-interaction lattice model, as shown in Section \ref{sec:latticeconsistency}.  

These results build on the correspondence that has been developed \cite{Bazhanov:2007mh,Bazhanov:2010kz,Bazhanov:2011mz,Bazhanov:2016ajm,Kels:2018xge,Kels:2020zjn}  between integrable lattice models of statistical mechanics and integrable systems of difference equations in the quasi-classical limit.  These are two types of models which are classified as integrable but apparently have completely different characteristics that describe their integrability.  The main importance of investigating the quasi-classical limit, as has been done in this paper, is that it shows how these independent classes of integrable models are in fact intrinsically connected, and it provides a new way to further develop our understanding of both types of integrable models.  Specifically, this has been realised in this paper through the connection of the multicomponent star-star relation for integrable lattice models of statistical mechanics and new 5-point difference equations which provide multicomponent extensions of known integrable scalar difference equations.

For future research directions, it would be interesting to be able to extend the known integrable characteristics of the scalar $n=2$ equations to the case of $n>2$, where possible.  For example, this includes understanding the solutions of the equations, and showing that CAFCC is satisfied for $n>2$ and deriving associated Lax pairs.  It would also be interesting to study the hex systems \cite{Kels:2022pdz,GKV24} associated to the multicomponent 5-point equations.  In the scalar case, the 5-point equations may be constructed using three-leg forms of quad equations \cite{ABS}, while the same 4-point quad equations are contained in the expressions for the 5-point equations themselves.  However, the analogue of such constructions are not yet understood for $n>2$, and the relevant multicomponent 4-point equations (if they exist) have not appeared in the literature, as far as the author is aware.  Finally, investigating any integrable reductions of the multicomponent equations would also be of interest.  %Another question is whether the equations can be studied through an analogue of algebraic entropy, since for $n>2$ the equations are in general not rational, at least in the form given in this paper.

\section*{Acknowledgements}

This research was partially supported through ARC Discovery Project DP200102118.

% \end{appendices}
{\small
\bibliography{MComp}
\bibliographystyle{utphys}
}

\end{document}